\pdfoutput=1
\documentclass[11pt]{article}
\usepackage[preprint]{acl}
\usepackage{times}
\usepackage{latexsym}
\usepackage[T1]{fontenc}
\usepackage[utf8]{inputenc}
\usepackage{microtype}
\usepackage{inconsolata}
\usepackage{graphicx}
\usepackage{booktabs}
\usepackage{array}
\usepackage{multirow}
\usepackage{amsmath,amssymb,amsthm}
\usepackage{algorithm}
\usepackage{algpseudocode}

\usepackage{amsmath,amsfonts,bm}

\def\eqref#1{equation~\ref{#1}}

\def\1{\bm{1}}

\DeclareMathAlphabet{\mathsfit}{\encodingdefault}{\sfdefault}{m}{sl}
\SetMathAlphabet{\mathsfit}{bold}{\encodingdefault}{\sfdefault}{bx}{n}

\DeclareMathOperator*{\argmin}{arg\,min}

\newtheorem{theorem}{Theorem}
\newtheorem{corollary}{Corollary}
\newtheorem{lemma}{Lemma}[section]
\newtheorem{proposition}{Proposition}[section]
\theoremstyle{remark}
\newtheorem{remark}{Remark}[section]
\newcommand{\W}{W_1}
\hypersetup{
  pdftitle={ColNanoVDR: Document-Free Query Distillation for Multi-Vector Visual Document Retrieval via Optimal Transport},
  pdfauthor={Zhuchenyang Liu, Ziyi Wang, Yao Zhang, Yu Xiao}
}

\title{ColNanoVDR: Document-Free Query Distillation for Multi-Vector Visual Document Retrieval via Optimal Transport}

\author{
  Zhuchenyang Liu\textsuperscript{1} \quad
  Ziyi Wang\textsuperscript{2} \quad
  Yao Zhang\textsuperscript{1} \quad
  Yu Xiao\textsuperscript{1} \\
  \textsuperscript{1}Aalto University, Finland \\
  \textsuperscript{2}Independent Researcher, Netherlands \\
  \texttt{zhuchenyang.liu@aalto.fi} \\[6pt]
  {\normalfont\small
  Code:~\href{https://github.com/Ryenhails/NanoVDR}{\texttt{github.com/Ryenhails/NanoVDR}} \quad
  Models:~\href{https://huggingface.co/nanovdr}{\texttt{huggingface.co/nanovdr}}}
}

\begin{document}
\maketitle

\begin{abstract}
Multi-vector retrievers built on vision-language models lead visual document retrieval (VDR), but they run a multi-billion-parameter query encoder on every search. 
Distilling this encoder into a small student that queries the teacher's existing index would remove the bottleneck.
The standard recipe, however, matches the teacher's MaxSim scores and so requires encoding and caching every training page, which can reach terabytes of page tokens. 
NanoVDR avoids pages entirely by training on the teacher's query embeddings alone, but only for single-vector retrievers. We present ColNanoVDR, to our knowledge the first framework to bring this document-free distillation to multi-vector VDR. 
Its objective, OTW (Optimal Transport with Learned Weights), aligns the student's query tokens with the teacher's by entropic optimal transport, with a learned weight for each student token, and needs no correspondence between the two tokenizations.
We prove that the resulting alignment cost bounds the MaxSim score difference on every page. 
Distilled from five state-of-the-art teachers, the 149M text-only students retain about 95\% of their teachers' NDCG@5 on ViDoRe v1--v3 while encoding queries up to 26$\times$ faster. Under identical training, OTW matches score distillation while encoding no page and reading 12.6$\times$ less cached teacher data.
\end{abstract}

\section{Introduction}\label{sec:intro}
\begin{figure}[t]
  \centering
  \includegraphics[width=\linewidth]{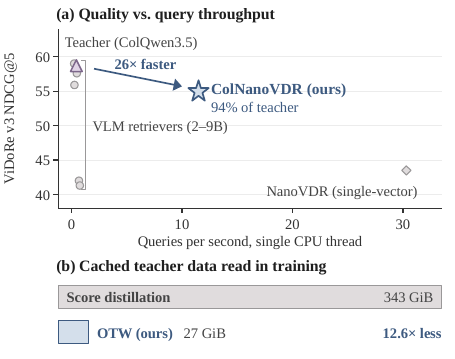}
  \caption{\textbf{ColNanoVDR approaches its teacher's quality at a fraction
  of its inference and training cost.} (a)~ViDoRe v3 NDCG@5 against query
  throughput on a single CPU thread. (b)~Cached teacher data read during
  training: score distillation reads query and page tokens, OTW only query
  tokens.}
  \label{fig:teaser}
\end{figure}

Visual document retrieval (VDR) matches textual queries directly against
page images, without relying on optical character recognition (OCR)
\citep{faysse2025colpaliefficientdocumentretrieval}. State-of-the-art
retrievers build both the query and document encoders on large
vision-language models (VLMs) with up to 9B parameters
\citep{loison2026vidorev3comprehensiveevaluation}. Pages are indexed
offline once, but the query encoder runs on every request, so a
multi-billion-parameter VLM sits on the online path of every search.
\textbf{Query-side distillation} removes this bottleneck: it replaces only
the teacher's query encoder with a small student and keeps the teacher's
document encoder and page index unchanged. NanoVDR
\citep{liu2026nanovdrdistilling2bvisionlanguage} showed that this works for
VDR with a text-only student trained on the teacher's query embeddings
alone, with no page encoded or read during training; we call such training
\textbf{document-free}.

However, NanoVDR targets single-vector retrievers, whereas the strongest
VDR systems are multi-vector: they represent queries and pages as sets of
token vectors and score them by late interaction, matching each query
token to its most similar page token and summing these maxima (MaxSim)
\citep{khattab2020colbertefficienteffectivepassage,faysse2025colpaliefficientdocumentretrieval}.
On ViDoRe v3, the multi-vector teachers we consider lead the best
single-vector system by 15 to 19 NDCG@5 points (Table~\ref{tab:main}).
Extending query-side distillation to multi-vector teachers would bring a
compact query encoder to the strongest retrievers and let it serve their
existing indices without re-indexing. We hypothesize that such a student
can retain most of its teacher's quality at a small fraction of its size
and latency.

The direct route is score distillation, the standard recipe for
late-interaction students
\citep{santhanam-etal-2022-colbertv2,huang-chen-2024-pairdistill,clavié2024jacolbertv25optimisingmultivectorretrievers},
which trains the student to reproduce the teacher's MaxSim scores on
candidate pages. It is document-dependent, however: every training page
must be encoded by the teacher, and each high-resolution page yields over
a thousand token vectors. For ColVec1.1~\citep{webai_colvec1_1_4b}, we
estimate that one million training pairs would require about 2\,TiB of
cached page tokens (Appendix~\ref{app:cache}), and the cache must be
rebuilt for every new teacher. Keeping training document-free, as in
NanoVDR, would remove this page-side overhead entirely. 

Compared with the single-vector case, document-free training for multi-vector retrievers raises two problems. 
First, the two encoders tokenize a query differently
and produce token sets of different sizes with no correspondence between them. Second, MaxSim takes a maximum
over page tokens, so it is not obvious that aligning query tokens controls
the score on pages never seen in training.

\begin{figure}[t]
  \centering
  \includegraphics[width=\linewidth]{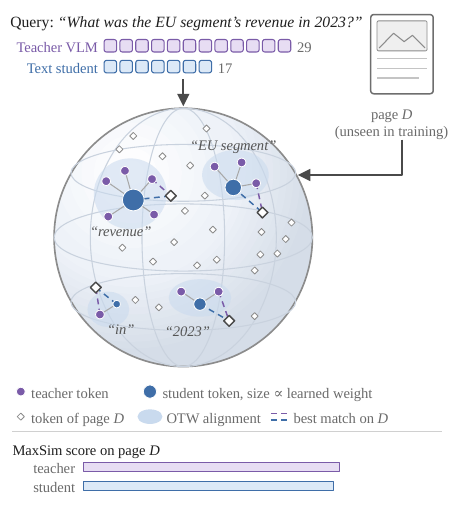}
  \caption{\textbf{Intuition behind OTW.} The teacher and the student
  encode the same query into token sets of different sizes on the unit
  sphere. OTW aligns the two sets softly (shaded regions): each student
  token covers nearby teacher tokens, and its learned weight grows with the
  number it covers. On a page $D$ never seen in training, aligned student
  and teacher tokens find the same best-matching page token, so the two
  MaxSim scores nearly coincide. Schematic.}
  \label{fig:otw-intuition}
\end{figure}

To address both problems, we propose \textbf{ColNanoVDR}, to our knowledge the first document-free, query-side distillation
framework for multi-vector VDR, trained with \textsc{OTW}
(\textbf{O}ptimal \textbf{T}ransport with Learned \textbf{W}eights). OTW
treats each query as a weighted set of token embeddings on the unit sphere
and aligns the student's set with the teacher's by entropic optimal
transport (Figure~\ref{fig:otw-intuition}). The transport plan couples
tokens without requiring a correspondence, and a lightweight linear head
predicts a weight for each student token from the query, so that one
student token can stand in for several teacher tokens; this addresses the
first problem. For the second, we show that the alignment cost between the
two token sets bounds the MaxSim score difference on every page
(Theorem~\ref{thm:w1}), so aligning queries suffices in principle and
training reads only the teacher's query tokens. At inference, the
student's tokens are rescaled by their weights and scored with standard
MaxSim against the teacher's unchanged index.

We distill five state-of-the-art multi-vector retrievers and evaluate on
ViDoRe v1, v2, and v3
\citep{faysse2025colpaliefficientdocumentretrieval,macé2025vidorebenchmarkv2raising,loison2026vidorev3comprehensiveevaluation}.
With 30$\times$--60$\times$ fewer parameters, ColNanoVDR retains about
95\% of its teacher's NDCG@5, and the ColQwen3.5 student encodes a query
26$\times$ faster than its teacher on a single CPU thread
(Figure~\ref{fig:teaser}a). Under identical training, OTW matches score
distillation while encoding no page and reading 12.6$\times$ less cached
teacher data (Figure~\ref{fig:teaser}b). Overall, ColNanoVDR offers a
practical path to deploying multi-vector visual document retrieval at
scale: queries are encoded on a single CPU thread and scored against the
teacher's existing index, and adding a new teacher requires only encoding
its training queries.

\section{Related Work}\label{sec:related}

\paragraph{Cost of multi-vector visual document retrieval.}
Late interaction scores a query against a document by matching every query token to its best document token \citep{khattab2020colbertefficienteffectivepassage}. ColPali brought it to page images \citep{faysse2025colpaliefficientdocumentretrieval}, and the ViDoRe benchmarks \citep{faysse2025colpaliefficientdocumentretrieval,macé2025vidorebenchmarkv2raising,loison2026vidorev3comprehensiveevaluation} have since driven multi-vector retrievers built on ever larger vision-language backbones \citep{tomoro2026colqwen3,athrael2026colqwen35}. Much of the work on the resulting overhead targets the index: storing fewer or merged document tokens \citep{hofstätter2022introducingneuralbagwholewords,Kankanampati_2026,ma2025storageefficientvisualdocumentretrieval,liu2026structuralanchorpruningtrainingfree}, or reducing multi-vector search to single-vector search \citep{dhulipala2026muveramultivectorretrievalfixed}. Another direction trains compact retrievers natively \citep{teiletche2025modernvbertsmallervisualdocument}, rebuilding both encoders. This requires re-indexing every collection, and the small document encoder limits quality. Neither direction addresses the overhead that remains once the multi-vector index is fixed. ColNanoVDR targets this overhead: it keeps the teacher's document encoder and index as they are and replaces only the query encoder, so pages keep the teacher's high-capacity representations, no collection is re-indexed, and the method is complementary to index-side compression.

\paragraph{Distillation for retrievers.}
Existing distillation recipes split along the axis that matters for our setting: whether training reads documents. Late-interaction students have been distilled through their scores on sampled query-document pairs, with a cross-encoder or pairwise reranker as the teacher \citep{santhanam-etal-2022-colbertv2,huang-chen-2024-pairdistill}, or by matching a strong teacher's ranking distribution over sampled documents \citep{clavié2024jacolbertv25optimisingmultivectorretrievers,takehi2025fantasticsmallretrieverstrain}; the same score-level signal has compressed late interaction into a single-vector student \citep{lin2020distillingdenserepresentationsranking}. All of these read documents during training, which in visual retrieval means pages that the vision-language teacher must first encode. Single-vector retrievers admit a document-free alternative: the student regresses the teacher's embedding directly. This has been used to distill full dual encoders \citep{yang2024clipkdempiricalstudyclip,lei-etal-2024-mcad} and, in the asymmetric setting where the two towers are parameterized separately \citep{dong2022exploringdualencoderarchitectures}, to replace only the query encoder against a frozen document encoder \citep{kim2023embeddistillgeometricknowledgedistillation,wang-hong-2023-query}, including NanoVDR, the closest prior work to ours, which does so with a text-only student for visual documents \citep{liu2026nanovdrdistilling2bvisionlanguage}. Regressing one vector onto another has no direct counterpart in late interaction, where each side is a set of token vectors of different sizes with no correspondence between them. ColNanoVDR supplies this counterpart, making query-side distillation document-free in the multi-vector setting.

\paragraph{Optimal transport and token weighting.}
Optimal transport has served as a distillation objective for aligning teacher and student distributions and representations: over labels \citep{bhardwaj-etal-2022-knot}, over the output distributions and hidden states of language models, including across different tokenizers \citep{10777837,cui2025multileveloptimaltransportuniversal,10.1609/aaai.v40i39.40619}, and over features within a batch \citep{chen2021wassersteincontrastiverepresentationdistillation}. Vision-language pretraining has aligned image patches with words at the token level, contrastively through a token-wise maximum similarity \citep{yao2021filipfinegrainedinteractivelanguageimage} or through a one-to-one matching used only during training \citep{nie2023lightcliplearningmultilevelinteraction}. Our use differs in what is transported: we transport between the two token-embedding sets from which the retrieval score itself is computed, which is what makes the alignment cost a bound on the score difference on every page (Theorem~\ref{thm:w1}). On the weighting side, late interaction sums query-token matches with equal weight, and a reproduction study traces the failure of multi-vector retrievers on long, narrative queries to this uniform weighting \citep{Ghosh_2026}. Proposed remedies attach weights to vocabulary items, set from corpus statistics or fitted on relevance labels \citep{s2025incorporatingtokenimportancemultivector}, or produce them with a gating module trained on relevance labels \citep{kang-etal-2025-trial}. Our weights are instead predicted per token from the query's context and learned without relevance labels or documents, as the student-side marginal of the transport plan; the same weights are kept at inference, so the quantity trained is the quantity served.

\begin{figure}[t]
\centering
\includegraphics[width=1\linewidth]{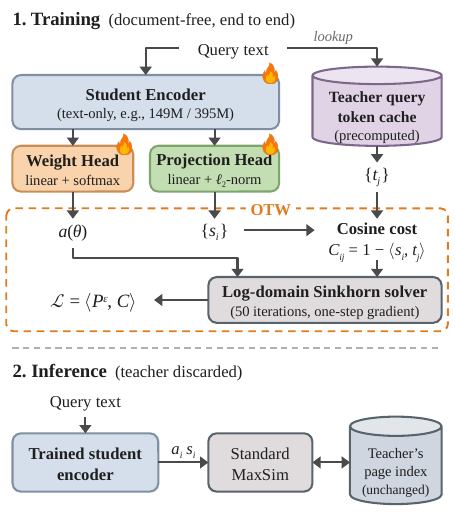}
\caption{The ColNanoVDR pipeline. \textbf{Training (top):} the student
outputs unit-norm query tokens $\{s_i\}$ and token weights $a(\theta)$;
OTW (dashed box) aligns them with the cached teacher query tokens
$\{t_j\}$ by entropic optimal transport, and the loss is the soft
alignment cost $\langle P^{\varepsilon},C\rangle$. 
\textbf{Inference (bottom):} the teacher is discarded; weighted
student tokens $a_i s_i$ are scored with standard MaxSim against the
unchanged teacher index.}
\label{fig:method}
\end{figure}

\section{Method}\label{sec:method}

ColNanoVDR keeps the teacher's document encoder and page index unchanged
and replaces only its query encoder (Figure~\ref{fig:method}). For a
query, the frozen teacher produces $K_t$ unit-norm token embeddings
$\{t_j\}_{j=1}^{K_t}$, computed once and cached; the student produces
$K_s$ unit-norm embeddings $\{s_i\}_{i=1}^{K_s}$ in the same space,
together with a weight for each token. The two models tokenize
differently, so $K_s\neq K_t$ in general and their tokens have no
correspondence. OTW trains the student by aligning the two token sets
with optimal transport. We first define the weighted score both models
share (Section~\ref{sec:notation}), then the alignment and its cost
(Section~\ref{sec:objective}), the learned weights and the training
objective (Section~\ref{sec:marginals}), and finally show why aligning
queries controls the score on every page (Section~\ref{sec:theory}).
Section~\ref{sec:inference} covers the architecture, solver, and
inference, and Appendix~\ref{app:proofs} proves every formal claim.

\subsection{Queries as weighted token sets}\label{sec:notation}

We represent a query by its unit-norm token embeddings $q_1,\dots,q_K$
with nonnegative weights $\omega_1,\dots,\omega_K$ summing to one, i.e., as
a weighted token set $\mu=\sum_i \omega_i\delta_{q_i}$, a discrete
probability measure in which $\delta_{q}$ is a unit point mass at $q$. For a page $D$, a set of unit-norm page
tokens, let $h_D(q)=\max_{d\in D}\langle q,d\rangle$ be the best match a
single query token $q$ finds on the page. The weighted late-interaction
score is the weighted average of these best matches,
\begin{equation}\label{eq:score-integral}
\bar S(\mu,D)\;=\;\sum_i \omega_i\,h_D(q_i).
\end{equation}
With uniform weights this is standard MaxSim divided by the query length,
which leaves the ranking unchanged. The two models differ in their
weights. The teacher keeps uniform weights $b=(b_1,\dots,b_{K_t})$ with
$b_j=1/K_t$, so $\mu_T=\sum_j b_j\delta_{t_j}$ is scored by its standard
MaxSim; the student uses learned weights $a=(a_1,\dots,a_{K_s})$, so
$\mu_S=\sum_i a_i\delta_{s_i}$ (Section~\ref{sec:marginals}). Distillation asks that
$\bar S(\mu_S,D)\approx\bar S(\mu_T,D)$ on every page $D$, without
seeing any page during training.

\subsection{Aligning token sets with optimal transport}\label{sec:objective}

\paragraph{Transport plans.}
We compare the two token sets through a soft alignment, much like a
word-alignment matrix in machine translation: a nonnegative matrix
$P\in\mathbb{R}^{K_s\times K_t}$ in which $P_{ij}$ is the share of student
token $i$'s weight assigned to teacher token $j$. Every student token
hands out exactly its weight $a_i$, and every teacher token receives
exactly its weight $b_j$:
\begin{equation}\label{eq:marginals}
\textstyle\sum_j P_{ij}=a_i,\qquad \sum_i P_{ij}=b_j .
\end{equation}
We call such a matrix a \emph{transport plan} and write $U(a,b)$ for the
set of them. A plan needs no correspondence between the two
tokenizations: one student token may cover several teacher tokens, and
one teacher token may be split across several student tokens.

\paragraph{Alignment cost.}
We measure the disagreement between two tokens by the cosine cost
$c(x,y)=1-\langle x,y\rangle$, built on the inner product that MaxSim
uses, and collect it in the cost matrix $C_{ij}=c(s_i,t_j)$. The cost of a plan, $\langle P,C\rangle=\sum_{ij}P_{ij}C_{ij}$, is
the average disagreement between aligned tokens, and the lowest cost over
all plans,
\begin{equation}\label{eq:alignment-cost}
\mathrm{OT}_c(\mu_S,\mu_T)\;=\;\min_{P\in U(a,b)}\;\langle P,C\rangle ,
\end{equation}
is an earth mover's problem between the two token sets, the formulation
behind Word Mover's Distance \citep{kusner2015word}, here posed between
two encoders' embeddings of the same query. We refer to it as the
\textbf{alignment cost}. It is small when every teacher token has student
weight close to it, and Section~\ref{sec:theory} shows that it bounds the
MaxSim score difference on every page.

\paragraph{Entropic smoothing.}
The optimal plan of Equation~\ref{eq:alignment-cost} solves a linear
program: it is sparse and can jump between alignments under small changes
of the embeddings, a poor target for gradient training. We add an entropy
term \citep{cuturi2013sinkhorn,peyré2020computationaloptimaltransport},
which spreads weight over plausible partners:
\begin{equation}\label{eq:entropic-ot}
P^{\varepsilon}(a)\;=\;\argmin_{P\in U(a,b)}\;\langle P,C\rangle-\varepsilon H(P),
\end{equation}
where $H(P)=-\sum_{ij}P_{ij}\log P_{ij}$. We write $P^{\varepsilon}(a)$
because the student weights $a$ are learned, whereas the teacher weights
$b$ are fixed. The strength $\varepsilon$ acts
as a temperature: as $\varepsilon\to 0$ the plan approaches the optimal
one, and a large $\varepsilon$ spreads each token's weight evenly. The
plan is unique, differentiable in $C$ and $a$, and computed with Sinkhorn
iterations (Section~\ref{sec:solver}).

\subsection{Learned token weights and the OTW objective}\label{sec:marginals}

The alignment cost depends on the student's weights. Uniform weights,
which recover plain MaxSim, fit poorly whenever the two encoders tokenize
a query differently. This is the general case for query-side
distillation: the student and teacher backbones differ in vocabulary,
segmentation, and special tokens such as ColBERT-style query augmentation.
On our training queries, for instance, the student produces 17 tokens on
average against the teacher's 29 (Appendix~\ref{app:tokens}). Some
student tokens must then stand in for several teacher tokens, while others
have little to align with, yet uniform weights make every student token
hand out the same mass, which keeps the alignment cost high. Rather than
engineering the two tokenizers into correspondence for each
teacher--student pair, we let the student learn its own weights, a
lightweight design that applies to any pair.

If the weights could be chosen freely to minimize the alignment cost,
each teacher token would send its mass to its nearest student token, and
a student token's weight would become the share of teacher tokens for
which it is nearest (Proposition~\ref{prop:semirelaxed}). However, these
weights depend on the teacher's tokens, which are not available at
inference. We therefore train the student to predict its own weights from
the query alone. A linear head $w_\theta$ reads each token's hidden state
$z_i$ before the projection, and a softmax over the query's tokens turns
the scores into weights,
\begin{equation}\label{eq:weights}
a(\theta)\;=\;\mathrm{softmax}\bigl(w_\theta(z_1),\dots,w_\theta(z_{K_s})\bigr),
\end{equation}
where $\theta$ collects all student parameters: the encoder, the
projection, and the weight head.

\paragraph{Training objective.}
The \textsc{OTW} loss is the soft alignment cost of the entropic plan
under the student's predicted weights,
\begin{equation}\label{eq:final-objective}
\mathcal{L}(\theta)\;=\;\bigl\langle P^{\varepsilon}\bigl(a(\theta)\bigr),\;C(\theta)\bigr\rangle
\;=\;\textstyle\sum_{ij}P^{\varepsilon}_{ij}\,C_{ij},
\end{equation}
minimized end to end over the encoder, the projection, and the weight
head. The weight head receives no direct supervision: through the loss,
each student token moves toward the teacher tokens aligned with it, and
weight shifts toward student tokens that lie close to many teacher tokens.
Every term of
$\mathcal{L}$ is computed from the two query token sets, so training
requires no pages and no document cache.

\begin{table*}[t]
\centering
\footnotesize
\begin{tabular}{ll r@{~}l r@{~}l r@{~}l}
\toprule
Model & Params & \multicolumn{2}{c}{v1} & \multicolumn{2}{c}{v2} & \multicolumn{2}{c}{v3} \\
\midrule
\multicolumn{8}{l}{\textit{Reference systems (native retrieval)}} \\
\quad Tomoro-ColQwen3-8B       & 8.8B & 90.6 & & 65.0 & & 59.0 & \\
\quad ColVec1.1-8b             & 8.4B & 91.5 & & 67.8 & & 62.6 & \\
\quad ColNomic-7B              & 7.8B & 89.8 & & 60.4 & & 55.9 & \\
\quad ColQwen3.5-4.5B          & 4.5B & 91.6 & & 63.7 & & 58.7 & \\
\quad Vultron-4.5B             & 4.5B & 91.8 & & 67.6 & & 61.0 & \\
\quad ColVec1.1-4b             & 4.5B & 90.7 & & 66.6 & & 61.6 & \\
\quad Tomoro-ColQwen3-4B       & 4.4B & 90.2 & & 65.3 & & 57.6 & \\
\quad DSE-Qwen2                & 2.2B & 85.1 & & 55.7 & & 41.3 & \\
\quad ColPali-v1.3             & 3.0B & 84.2 & & 54.7 & & 42.0 & \\
\quad ColModernVBert           & 259M & 76.7 & & 33.4 & & 17.4 & \\
\quad NanoVDR-S (single-vector) &  69M & 82.2 & & 60.5 & & 43.5 & \\
\midrule
\multicolumn{8}{l}{\textit{ColNanoVDR (149M text-only student, document-free OTW), by teacher}} \\
\quad from ColQwen3.5-4.5B      & 149M & 90.7 & (99.0) & 60.0 & (94.2) & 55.1 & (93.8) \\
\quad from Tomoro-ColQwen3-8B   & 149M & 90.0 & (99.3) & 60.6 & (93.3) & 54.9 & (93.0) \\
\quad from Vultron-4.5B         & 149M & 91.3 & (99.4) & 65.0 & (96.1) & 58.3 & (95.5) \\
\quad from ColVec1.1-4b         & 149M & 90.3 & (99.5) & 64.0 & (96.1) & 59.1 & (95.8) \\
\quad from ColVec1.1-8b         & 149M & 90.9 & (99.3) & 65.4 & (96.5) & 60.1 & (96.0) \\
\bottomrule
\end{tabular}
\caption{Main results. NDCG@5 per benchmark and, for ColNanoVDR, retention
of its own teacher in parentheses, each student scored against that
teacher's index. Reference systems are evaluated by us under the identical
protocol (model identifiers in Appendix~\ref{app:eval}).}
\label{tab:main}
\end{table*}

\subsection{Why aligning queries suffices}\label{sec:theory}

The OTW objective aligns the student's query tokens with the teacher's,
but it is not obvious that this also aligns their MaxSim scores: the score
takes a maximum over the tokens of a page, and training never sees a page.
Viewing each query as a discrete measure on the unit sphere
(Section~\ref{sec:notation}), we bound the score difference directly by
the quantities OTW computes (proof in Appendix~\ref{app:proof-thm1}).

\begin{theorem}\label{thm:w1}
For every non-empty finite page $D$ on the unit sphere,
\[
\begin{aligned}
&\bigl|\bar S(\mu_S,D)-\bar S(\mu_T,D)\bigr|\\
&\quad\le\sqrt{2\,\mathrm{OT}_c(\mu_S,\mu_T)}
\le\sqrt{2\,\mathcal{L}(\theta)}.
\end{aligned}
\]
\end{theorem}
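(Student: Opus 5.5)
The plan is to prove the first inequality through a Lipschitz property of $h_D$ and a coupling argument, and then compare the entropic plan with the optimal one for the second.

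\textbf{Lipschitz step.} First I would show that $h_D$ is $1$-Lipschitz in Euclidean norm. For unit vectors $x,y$ and any $d\in D$ with $\|d\|=1$, Cauchy--Schwarz gives $\langle x,d\rangle\le\langle y,d\rangle+\|x-y\|$. Taking the maximum over the finite non-empty $D$ yields $h_D(x)\le h_D(y)+\|x-y\|$, and by symmetry $|h_D(x)-h_D(y)|\le\|x-y\|$.

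\textbf{Link to the cosine cost.} On the unit sphere,
\[
\|x-y\|^2=2-2\langle x,y\rangle=2c(x,y),
\]
so $|h_D(x)-h_D(y)|\le\sqrt{2c(x,y)}$.

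\textbf{Coupling step.} Let $P^\star\in U(a,b)$ be an optimal plan for $\mathrm{OT}_c(\mu_S,\mu_T)$; it exists because $U(a,b)$ is compact and non-empty. The marginal constraints in Equation~\ref{eq:marginals} let me write
\[
\bar S(\mu_S,D)-\bar S(\mu_T,D)=\sum_{ij}P^\star_{ij}\bigl(h_D(s_i)-h_D(t_j)\bigr).
\]
Applying the triangle inequality and the bound above, the absolute value is at most $\sum_{ij}P^\star_{ij}\sqrt{2C_{ij}}$. Since $P^\star$ is a probability vector (total mass $1$) and the square root is concave, Jensen's inequality bounds this by
\[
\sqrt{2\sum_{ij}P^\star_{ij}C_{ij}}=\sqrt{2\,\mathrm{OT}_c(\mu_S,\mu_T)}.
\]
This proves the first inequality.

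\textbf{Second inequality.} The entropic plan $P^{\varepsilon}(a(\theta))$ is feasible in $U(a(\theta),b)$. Its cost $\langle P^\varepsilon,C\rangle=\mathcal{L}(\theta)$ is therefore at least the minimum $\mathrm{OT}_c$, where $\mu_S$ carries the student's predicted weights $a(\theta)$. Monotonicity of $\sqrt{2\,\cdot}$ then finishes the proof.

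The only point needing care is using the \emph{same} $a$ in both places, i.e.\ that $\mu_S$ in the theorem uses $a(\theta)$. I expect no real obstacle. The main subtle step is the Jensen move, which is what produces the square root rather than a linear bound. It needs the plan's total mass to equal one, which holds because $\sum_i a_i=\sum_j b_j=1$.
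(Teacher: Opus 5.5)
Your proof is correct and is essentially the paper's argument: the 1-Lipschitz property of $h_D$, rewriting the score difference through a plan via its marginals, and Cauchy--Schwarz (your Jensen step) at an optimal plan to reach $\sqrt{2\,\mathrm{OT}_c}$, then feasibility of $P^{\varepsilon}(a(\theta))$ in $U(a(\theta),b)$ for the second inequality. The only cosmetic difference is that you substitute $\rho=\sqrt{2c}$ pair by pair before averaging, whereas the paper first stops at the chordal sum $\sum_{ij}P_{ij}\rho(s_i,t_j)$, which it keeps because it also gives the $W_1$ bound of Proposition~\ref{prop:w1}.
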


In plain terms, the objective OTW minimizes is an upper bound on the
MaxSim score difference between student and teacher on every page,
including pages absent from training. OTW thus provides a direct
sufficient condition for retrieval fidelity.

\subsection{Architecture, solver, and inference}\label{sec:inference}
\label{sec:solver}

\paragraph{Architecture.}
The student is a text-only encoder with two linear heads on its token
states (Figure~\ref{fig:method}, top): a bias-free projection to the
teacher's width followed by $\ell_2$ normalization, which yields
$\{s_i\}$, and the weight head of Equation~\ref{eq:weights}
(Appendix~\ref{app:training}). The teacher's query tokens are cached once,
so the teacher is never run during training.

\paragraph{Solver.}
We solve Equation~\ref{eq:entropic-ot} with log-domain Sinkhorn
iterations. The first iterations run without gradient tracking; only the
last one is recomputed inside the autograd graph, and gradients flow
through it to both the embeddings (through $C$) and the weights (through
$a$)
\citep{luise2018differentialpropertiessinkhornapproximation,eisenberger2022unifiedframeworkimplicitsinkhorn}.
Backpropagation thus needs the memory of a single iteration. On
query-sized matrices (about $17\times 29$), training with the solver is
about as fast as score distillation (Appendix~\ref{app:solver},
Algorithm~\ref{alg:otw}).

\paragraph{Inference.}
At inference time the teacher is discarded (Figure~\ref{fig:method},
bottom). The student scales each token by its weight,
$\tilde s_i=a_i s_i$, and MaxSim then returns
\[
\textstyle\sum_i\max_{d\in D}\langle\tilde s_i,d\rangle
=\sum_i a_i\,h_D(s_i)=\bar S(\mu_S,D),
\]
since positive weights move out of the maximum
(Proposition~\ref{prop:inference}). This is exactly the weighted score that
training aligns with the teacher's, so the student plugs into an existing
late-interaction engine with no change to its index or scoring kernel.

\section{Experiments}\label{sec:experiments}

Our experiments answer four questions. \textbf{Fidelity:} how much of a
multi-vector teacher's retrieval quality does a document-free student
retain, across teachers and student sizes? \textbf{Efficiency:} what does
the student save in query latency at inference and in cached teacher data
during training? These two are answered in Section~\ref{sec:results}.
\textbf{Objective:} under identical training, does OTW match
document-dependent score distillation, and do its learned weights matter
(Section~\ref{sec:mechanism})? \textbf{Deployment:} does the student stay
compatible with a compressed index (Section~\ref{sec:compression})?

\subsection{Setup}\label{sec:setup}

\paragraph{Teachers.}
Because OTW needs the teacher only on the training queries, adding a teacher is cheap, which makes a multi-teacher study feasible. We distill the five strongest multi-vector retrievers on ViDoRe v3 \citep{loison2026vidorev3comprehensiveevaluation} under our protocol (upper block of Table~\ref{tab:main}), spanning two backbone generations, two embedding dimensions (320 and 640), and 4.5B to 8.8B parameters. ColQwen3.5-4.5B, our main teacher, is abbreviated ColQwen3.5 in the text.

\paragraph{Students.}
We use the Ettin encoder suite \citep{weller2026seqvsseqopen}, pretrained with one recipe across all sizes, so that capacity is the only scaling variable, with the two heads of Section~\ref{sec:inference}. Its tokenizer differs from every teacher's, the general case that OTW targets. Main results use Ettin-150M, which with its heads gives a 149M-parameter student; a capacity ablation covers Ettin-32M, -68M, -150M, and -400M.

\paragraph{Training data and optimization.}
We adopt the NanoVDR training set \citep{liu2026nanovdrdistilling2bvisionlanguage}\footnote{\url{https://huggingface.co/datasets/nanovdr/NanoVDR-Train}}: 711{,}603 (query, page-image) pairs from four public sets, plus 777{,}649 machine-translated query variants that reuse the base pairs' pages (Appendix~\ref{app:data}). OTW and every alternative objective of Section~\ref{sec:mechanism} are trained on identical data with identical optimization; hyperparameters and transport settings are given in Appendix~\ref{app:training}.

\paragraph{Evaluation.}
We measure retrieval quality by NDCG@5, the standard ViDoRe metric, averaged within each benchmark
\citep{faysse2025colpaliefficientdocumentretrieval,macé2025vidorebenchmarkv2raising,loison2026vidorev3comprehensiveevaluation}: v1 (10 datasets), v2
(4 datasets), and v3 (8 datasets), together with retention, the ratio of
the student's to the teacher's benchmark-average NDCG@5 under the same
index, computed before rounding. The v3 suite consists of enterprise
collections (finance, human resources, industrial, pharmaceutical,
physics, computer science, energy), largely outside the training domains, with
queries authored against long multi-page documents. Efficiency is measured
by single-query encoding latency on CPU and GPU and by the volume of cached
teacher data read during training.

\paragraph{Comparisons.}
We compare each student with its own teacher and with ten further
retrievers evaluated under the identical protocol
(Table~\ref{tab:main}): multi-vector vision-language retrievers from
ColPali-v1.3 to the current state of the art, the single-vector
DSE-Qwen2, and two compact retrievers, ColModernVBert and the
single-vector NanoVDR-S. To isolate the objective,
Section~\ref{sec:mechanism} retrains the same student under two
document-dependent and two document-free alternatives.

\subsection{Main results}\label{sec:results}

\begin{table}[t]
\centering
\footnotesize
\setlength{\tabcolsep}{3.5pt}
\begin{tabular}{@{}lrrrr@{}}
\toprule
 & & \multicolumn{2}{c}{Encode (ms)} & \\
\cmidrule(lr){3-4}
Query encoder & Params & CPU & GPU & v3 \\
\midrule
\multicolumn{5}{@{}l}{\textit{Vision-language retrievers}} \\
Tomoro-ColQwen3-8B  & 8.8B & 4{,}277 & 18.4 & 59.0 \\
ColNomic-7B         & 7.8B & 3{,}845 & 23.4 & 55.9 \\
ColQwen3.5-4.5B (teacher) & 4.5B & 2{,}290 & 110.7$^\dagger$ & 58.7 \\
Tomoro-ColQwen3-4B  & 4.4B & 2{,}118 & 18.6 & 57.6 \\
ColPali-v1.3        & 3.0B & 1{,}504 & 17.0 & 42.0 \\
DSE-Qwen2           & 2.2B & 1{,}343 & 13.9 & 41.3 \\
\midrule
\multicolumn{5}{@{}l}{\textit{Compact retrievers}} \\
ColModernVBert      & 259M &  89 & 13.8 & 17.4 \\
NanoVDR-S (single-vector) &  69M &  33 &  1.7 & 43.5 \\
ColNanoVDR-400M     & 395M & 286 & 13.2 & 56.7 \\
\textbf{ColNanoVDR} & 149M &  87 & 10.8 & 55.1 \\
\bottomrule
\end{tabular}
\caption{Query-encoding cost on one node, median over 20 queries at batch size~1, excluding MaxSim scoring: one CPU thread (float32) and one H200 (bf16); v3 is ViDoRe v3 NDCG@5. ColNanoVDR students of the same size share the encoder, so one row per size; v3 is that of the ColQwen3.5 student. $^\dagger$Torch fallback for the hybrid linear-attention layers. Protocol in Appendix~\ref{app:latency}.}
\label{tab:serving}
\end{table}

\paragraph{Multi-vector quality is retained.}
In Table~\ref{tab:main}, each row of the lower block is a separate 149M student scored on its own teacher's index. On v1, every student retains about 99\% of its teacher's NDCG@5; on the harder v2 suite and the out-of-domain v3 enterprise collections, retention stays between 93.0\% and 96.5\% for all five teachers. A text-only student with 30$\times$--60$\times$ fewer parameters, trained without any document-side supervision, thus preserves most of its vision-language teacher's ranking quality.

\paragraph{Inference latency.}
Table~\ref{tab:serving} and Figure~\ref{fig:teaser}a measure the query path on one node. On a single CPU thread, ColNanoVDR encodes a query in 87\,ms, 26$\times$ faster than its ColQwen3.5 teacher with 30$\times$ fewer parameters, while every vision-language retriever needs more than a second per query. On a GPU at batch size~1 the gap is smaller, 1.3--2.2$\times$ against the other vision-language retrievers; ColQwen3.5's 110.7\,ms reflects a fallback kernel path rather than its model size.

\paragraph{Supervision cost.}
OTW reads the teacher's query tokens only. Score distillation reads, in addition, the cached tokens of the training pages and in-batch negatives at every step, 12.6 times as much cached teacher data in total (measured in Table~\ref{tab:cache} in Appendix~\ref{app:cache}). The document-free recipe removes this page-side cost for every new teacher. Per epoch, OTW trains at a speed comparable to score distillation (Appendix~\ref{app:training}), so the saving lies in storage and reads rather than computation.

\paragraph{Capacity.}
Across student capacity (Figure~\ref{fig:efficiency_scale}, numbers in Appendix~\ref{app:capacity}), retention rises
monotonically from Ettin-32M to Ettin-400M on every benchmark, with the
gains concentrated on v2 and v3. The Ettin-32M student already
retains 87\% of its teacher on v3. On v3, the retention curves of the two teachers
agree to within 1.0 point at every size.

\begin{figure}[t]
\centering
\includegraphics[width=\linewidth]{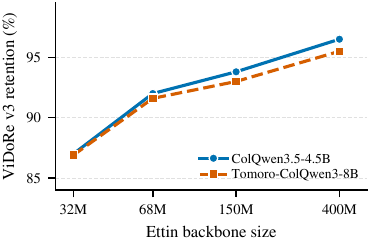}
\caption{ViDoRe v3 retention across the Ettin family, for the ColQwen3.5 and the Tomoro-ColQwen3-8B students.}
\label{fig:efficiency_scale}
\end{figure}

\begin{table*}[t]
\centering
\footnotesize
\setlength{\tabcolsep}{5pt}
\begin{tabular}{llccccccc}
\toprule
 & & & \multicolumn{3}{c}{ColQwen3.5-4.5B teacher} & \multicolumn{3}{c}{Tomoro-ColQwen3-8B teacher} \\
\cmidrule(lr){4-6}\cmidrule(lr){7-9}
Objective & Reads & Weights & v1 & v2 & v3 & v1 & v2 & v3 \\
\midrule
\multicolumn{9}{l}{\textit{document-dependent}} \\
\quad InfoNCE             & D   & uniform & 88.9 & 51.7 & 47.1 & 87.9 & 52.1 & 46.2 \\
\quad Listwise KL         & Q+D & uniform & \textbf{90.9} & \textbf{61.2} & 55.0 & 89.0 & 57.3 & 49.9 \\
\multicolumn{9}{l}{\textit{document-free}} \\
\quad Coverage            & Q   & uniform & 90.2 & 58.3 & 53.9 & 88.5 & 57.0 & 53.0 \\
\quad OT-uniform          & Q   & uniform & 89.6 & 58.7 & 53.4 & 84.4 & 51.4 & 50.2 \\
\quad \textsc{OTW} (ours) & Q   & learned & 90.7 & 60.0 & \textbf{55.1} & \textbf{90.0} & \textbf{60.6} & \textbf{54.9} \\
\bottomrule
\end{tabular}
\caption{Training objectives on two teachers (NDCG@5; same 149M Ettin
student and training setup throughout). ``Reads'': cached teacher data
per step, query tokens (Q), document tokens with in-batch negatives (D),
or both (Q+D). Bold: best per column.}
\label{tab:objectives}
\end{table*}

\subsection{Comparison of training objectives}\label{sec:mechanism}

We isolate the contribution of the objective by retraining the student under four alternative objectives with everything else fixed, on both ColQwen3.5 and Tomoro-ColQwen3-8B.

\paragraph{The objectives.}
Two baselines are document-dependent. \textbf{InfoNCE} contrasts the student's MaxSim score on the paired page with its scores on the other in-batch pages. \textbf{Listwise KL} matches the softmax over the student's in-batch MaxSim scores to the teacher's.

The other two are document-free and, like OTW, read only teacher query tokens, one query at a time. \textbf{Coverage} minimizes the average cosine distance from each teacher token to its nearest student token, the $\varepsilon\to 0$ limit of the semi-relaxed formulation (Appendix~\ref{app:semirelaxed}), and scores with uniform weights at inference. \textbf{OT-uniform} is OTW with the student weights fixed to $1/K_s$, i.e., without the weight head.

\paragraph{Against score distillation.}
OTW is on par with the strongest document-dependent baseline, Listwise KL: on ColQwen3.5 it trails on v2 (60.0 vs.\ 61.2) and matches it on v1 and v3, and on Tomoro-ColQwen3-8B it leads on all three benchmarks, by 5.0 points on v3. Document-free query alignment thus matches score distillation without its page-side cache. InfoNCE trails both on both teachers: a contrastive signal from paired pages alone transfers the teacher's token geometry poorly.

\paragraph{Against fixed-weight document-free objectives.}
Without learned weights, results become teacher-dependent. OT-uniform stays within 1.7 points of OTW on ColQwen3.5 but falls 4.7--9.2 points behind on Tomoro-ColQwen3-8B. Coverage, with hard assignments, is more robust than OT-uniform yet trails OTW on every benchmark for both teachers. Adding a weight head to a trained Coverage student in a second stage closes this gap on ColQwen3.5 (Appendix~\ref{app:twostage}); OTW obtains the same result in a single stage. Uniform weights thus fit poorly when the two token sets differ, and the learned weights keep distillation stable across teachers. Appendices~\ref{app:mechanism} and~\ref{app:bound} give further analysis.

\begin{table}[t]
\centering
\footnotesize
\setlength{\tabcolsep}{5pt}
\begin{tabular}{@{}lcccc@{}}
\toprule
Pool factor $f$ & Vec./page & Teacher & Student & Ret. \\
\midrule
1 (uncompressed) & 1{,}764 & 61.6 & 59.1 & 95.8 \\
3                &    587 & 61.6 & 59.0 & 95.7 \\
9                &    195 & 61.1 & 58.2 & 95.3 \\
\bottomrule
\end{tabular}
\caption{Index compression on ViDoRe v3 (mean NDCG@5 over its 8 datasets; retention in \%). The ColVec1.1-4b index is pooled with hierarchical token pooling at pool factor $f$; the teacher and its 149M student are scored against the same pooled index.}
\label{tab:compression-main}
\end{table}

\subsection{Deployment with index compression}\label{sec:compression}

Multi-vector indices are often compressed before deployment by merging each page's tokens into fewer vectors \citep{ma2025storageefficientvisualdocumentretrieval,Kankanampati_2026}. We pool the ViDoRe v3 index of ColVec1.1-4b with hierarchical token pooling \citep{clavié2024reducingfootprintmultivectorretrieval}\footnote{Implementation from the ColPali repository, \url{https://github.com/illuin-tech/colpali}, at pool factors 3 and 9.} and score teacher and student queries against the same pooled index. The student degrades about as much as the teacher: even with 9$\times$ fewer vectors, retention drops only from 95.8\% to 95.3\% (Table~\ref{tab:compression-main}; per dataset in Appendix~\ref{app:compression}). The two savings therefore compose: a compact query encoder works with a compressed index at nearly unchanged retention.

\section{Conclusion}\label{sec:conclusion}

We introduced ColNanoVDR, a document-free, query-side distillation framework for multi-vector VDR built on OTW, which aligns query token sets by entropic optimal transport with learned token weights and bounds the MaxSim score difference on every page. Across five teachers, the 149M text-only students retain about 95\% of their teachers' NDCG@5 and encode queries up to 26$\times$ faster on a single CPU thread, while OTW matches score distillation with 12.6$\times$ less cached teacher data.

\section*{Limitations}
All students come from a single encoder family (Ettin), and every configuration is a single run with a fixed seed; we report no variance. On ColQwen3.5, the margins between document-free objectives in Table~\ref{tab:objectives} are at most two points, and those on v1 lie within the range a single run cannot resolve. The objective comparison covers two teachers, ColQwen3.5 and Tomoro-ColQwen3-8B; for the remaining three teachers in Table~\ref{tab:main} we train OTW only, so whether score distillation and the fixed-weight alternatives order the same way under them is not tested. Although OTW reads no query--page pairing, all of our training queries come from a paired set; training on unpaired query logs is not exercised. Only the query encoder is compressed: the document index and its storage remain the teacher's, and search cost falls only through the shorter query, so the method lowers query-encoding latency but not index size. The theoretical guarantee is a sufficient condition on the score discrepancy and is loose by a factor of about eight on the students we measure (Appendix~\ref{app:bound}); it does not distinguish between weightings of the student measure, and score distillation, which does not optimize it, retrieves comparably. Finally, the students are English-centric encoders evaluated on ViDoRe, whose queries are predominantly English; although the training data include machine-translated queries and ViDoRe v3 includes a French subset (Table~\ref{tab:compression}), we do not break results down by language.

\bibliography{refs}

\appendix

\section{Proofs for Section~\ref{sec:method}}\label{app:proofs}

This appendix states and proves the formal claims of
Section~\ref{sec:method} in the order in which the main text uses them.
Appendix~\ref{app:notation} fixes the notation.
Appendices~\ref{app:proof-thm1} and~\ref{app:corollary} prove
Theorem~\ref{thm:w1} and its consequence for rankings
(Section~\ref{sec:theory}). Appendix~\ref{app:cost} relates the cosine
bound to the tighter chordal one, and Appendix~\ref{app:feasibility} accounts
for the plan that the solver actually returns.
Appendix~\ref{app:semirelaxed} derives the nearest-neighbor weights of
Section~\ref{sec:marginals}, and Appendix~\ref{app:inference} shows that
token scaling realizes the weighted score at inference
(Section~\ref{sec:inference}).

\subsection{Setting and notation}\label{app:notation}

All token embeddings lie on the unit sphere of the teacher's
$m$-dimensional embedding space,
$\mathbb{S}^{m-1}\subset\mathbb{R}^m$. A page is a non-empty finite set
$D\subset\mathbb{S}^{m-1}$, and its best-match function
$h_D(x)=\max_{d\in D}\langle x,d\rangle$ is defined for every
$x\in\mathbb{R}^m$. For $x\in\mathbb{S}^{m-1}$, Cauchy--Schwarz gives
$h_D(x)\in[-1,1]$.

A weighted token set is a discrete probability measure
$\mu=\sum_{i=1}^{K}\omega_i\delta_{q_i}$ with $q_i\in\mathbb{S}^{m-1}$ and
$\omega$ in the probability simplex $\Delta_K$. Its score on a page is
$\bar S(\mu,D)=\sum_i \omega_i\,h_D(q_i)$ (Equation~\ref{eq:score-integral}).
The student measure is $\mu_S=\sum_i a_i\delta_{s_i}$ with
$a\in\Delta_{K_s}$, and the teacher measure is
$\mu_T=\sum_j b_j\delta_{t_j}$ with $b\in\Delta_{K_t}$. In our setting
$b_j=1/K_t$, but every result below holds for arbitrary $b$.
Figure~\ref{fig:sphere} illustrates the setting.

The set of transport plans is
\[
U(a,b)=\bigl\{P\in\mathbb{R}_{\ge 0}^{K_s\times K_t}:\
P\mathbf{1}=a,\ P^{\top}\mathbf{1}=b\bigr\}.
\]
It contains the product plan $ab^{\top}$ and is a compact polytope, and
every $P\in U(a,b)$ has total mass $\sum_{ij}P_{ij}=1$. We use two costs
between tokens: the chordal distance $\rho(x,y)=\lVert x-y\rVert$ and the
cosine cost $c(x,y)=1-\langle x,y\rangle$, which satisfy
$c=\tfrac12\rho^2$ on the sphere. We write $C_{ij}=c(s_i,t_j)$ and define
\begin{align*}
\W(\mu_S,\mu_T)&=\min_{P\in U(a,b)}\textstyle\sum_{ij}P_{ij}\,\rho(s_i,t_j),\\
\mathrm{OT}_c(\mu_S,\mu_T)&=\min_{P\in U(a,b)}\langle P,C\rangle .
\end{align*}
Both minima are attained, since the objectives are linear and $U(a,b)$ is
compact. The entropic plan $P^{\varepsilon}(a)$ is the unique solution of
Equation~\ref{eq:entropic-ot}. It lies in $U(a,b)$, and when $a$ and $b$
are strictly positive, as in our setting, all of its entries are strictly
positive for $\varepsilon>0$.

\begin{figure*}[t]
\centering
\includegraphics[width=0.62\textwidth]{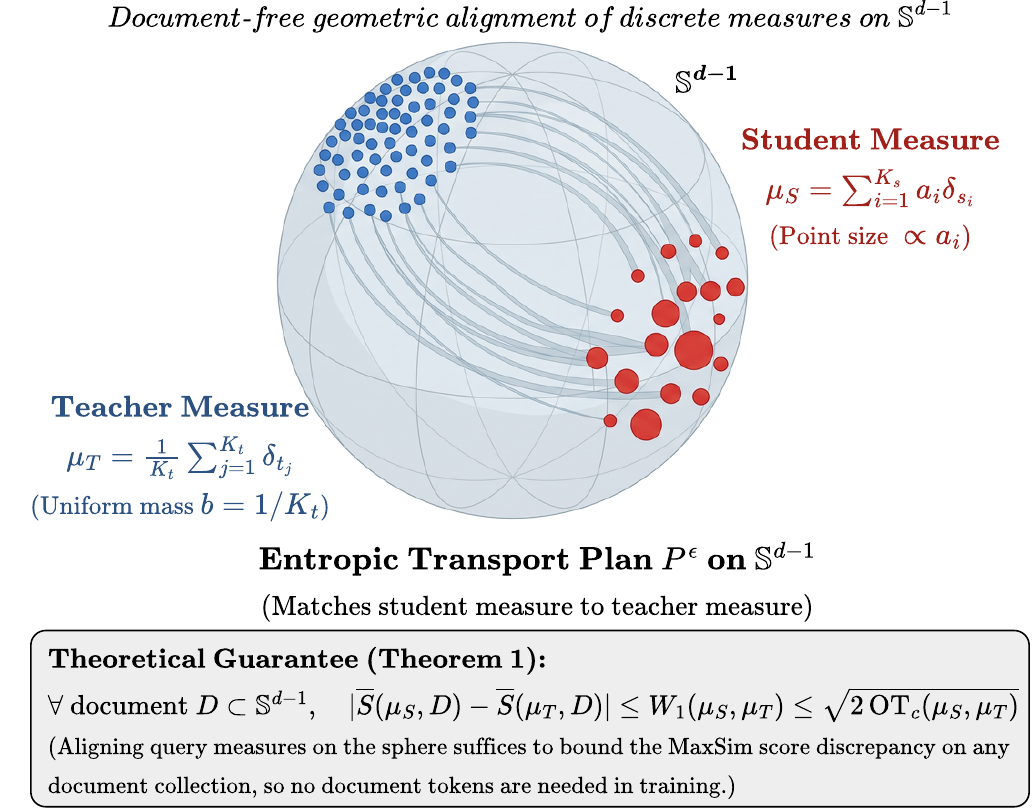}
\caption{The setting of Appendix~\ref{app:proofs}: the teacher's query
tokens form a uniform measure $\mu_T$ on the sphere, the student's a
weighted measure $\mu_S$, and a transport plan aligns them. By
Proposition~\ref{prop:anyplan}, the cost of any such plan bounds the difference
between the two weighted MaxSim scores on every page.}
\label{fig:sphere}
\end{figure*}

\subsection{Proof of Theorem~\ref{thm:w1}}\label{app:proof-thm1}

Two lemmas give a bound under the chordal distance
$\rho(x,y)=\lVert x-y\rVert$ for every transport plan
(Proposition~\ref{prop:w1}): a token's best match on a page is 1-Lipschitz
in the token (Lemma~\ref{lem:lipschitz}), and a plan rewrites the score
difference as a weighted sum over aligned pairs
(Lemma~\ref{lem:coupling}). A Cauchy--Schwarz step turns the chordal bound
into the cosine bound (Proposition~\ref{prop:anyplan}), from which
Theorem~\ref{thm:w1} follows.

\begin{lemma}\label{lem:lipschitz}
For every page $D$ and all $x,y\in\mathbb{R}^m$,
$|h_D(x)-h_D(y)|\le\lVert x-y\rVert$.
\end{lemma}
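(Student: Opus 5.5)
The plan is to use the elementary fact that a maximum of $1$-Lipschitz functions is $1$-Lipschitz. The page $D$ is non-empty and finite, so $h_D(x)=\max_{d\in D}\langle x,d\rangle$ is a maximum over finitely many linear functionals $x\mapsto\langle x,d\rangle$. Each of these is $1$-Lipschitz on all of $\mathbb{R}^m$, because $\lVert d\rVert=1$ and Cauchy--Schwarz give
\[
|\langle x,d\rangle-\langle y,d\rangle|=|\langle x-y,d\rangle|\le\lVert x-y\rVert .
\]

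The key step is a one-sided estimate. Fix $x,y\in\mathbb{R}^m$, and let $d^\ast\in D$ attain the maximum defining $h_D(x)$; it exists because $D$ is finite and non-empty. Then
\[
h_D(x)-h_D(y)\le\langle x,d^\ast\rangle-\langle y,d^\ast\rangle\le\lVert x-y\rVert,
\]
where the first inequality uses $h_D(y)\ge\langle y,d^\ast\rangle$. Exchanging the roles of $x$ and $y$ gives $h_D(y)-h_D(x)\le\lVert x-y\rVert$. Together the two estimates yield $|h_D(x)-h_D(y)|\le\lVert x-y\rVert$.

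There is no real obstacle here. The only points to check are that the maximum is attained, which follows from finiteness of $D$, and that the argument never uses $x$ or $y$ being unit vectors. Only $\lVert d\rVert=1$ for the page tokens is needed, which is why the lemma holds on all of $\mathbb{R}^m$, as stated.
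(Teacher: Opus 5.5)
Your proof is correct and matches the paper's argument: take a maximizer $d^\ast$ for $h_D(x)$ (which exists because $D$ is finite and non-empty), bound $\langle x,d^\ast\rangle$ by $\langle y,d^\ast\rangle+\lVert x-y\rVert$ using Cauchy--Schwarz and $\lVert d^\ast\rVert=1$, then $\langle y,d^\ast\rangle\le h_D(y)$, and swap $x$ and $y$. You also correctly observe that only the unit norm of the page tokens is used, so the bound holds for all $x,y\in\mathbb{R}^m$.
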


\begin{proof}
For each $d\in D$, Cauchy--Schwarz and $\lVert d\rVert=1$ give
$\langle x,d\rangle\le\langle y,d\rangle+\lVert x-y\rVert$. Let
$d^\ast\in\arg\max_{d\in D}\langle x,d\rangle$, which exists because $D$
is finite. Then
$h_D(x)=\langle x,d^\ast\rangle\le\langle y,d^\ast\rangle+\lVert x-y\rVert
\le h_D(y)+\lVert x-y\rVert$. Exchanging the roles of $x$ and $y$ gives
the reverse inequality.
\end{proof}

\begin{lemma}\label{lem:coupling}
For every $P\in U(a,b)$ and every function $f:\mathbb{R}^m\to\mathbb{R}$,
\[
\begin{aligned}
&\textstyle\sum_{i,j}P_{ij}\bigl(f(s_i)-f(t_j)\bigr)\\
&\quad=\textstyle\sum_i a_i f(s_i)-\sum_j b_j f(t_j).
\end{aligned}
\]
Taking $f=h_D$ rewrites the score difference through any plan,
\begin{equation}\label{eq:coupling}
\begin{aligned}
&\bar S(\mu_S,D)-\bar S(\mu_T,D)\\
&\qquad=\textstyle\sum_{i,j}P_{ij}\bigl(h_D(s_i)-h_D(t_j)\bigr).
\end{aligned}
\end{equation}
\end{lemma}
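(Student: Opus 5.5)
The identity follows from the marginal constraints that define $U(a,b)$ in Equation~\ref{eq:marginals}. The plan is to expand the double sum into two single sums and collapse each one using a row or column marginal. No property of $f$ is needed beyond its being defined at the finitely many points $s_i$ and $t_j$. All sums are finite, so splitting and reordering them is unrestricted.

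First, by linearity, the left-hand side splits as
\[
\textstyle\sum_{i,j}P_{ij}f(s_i)-\sum_{i,j}P_{ij}f(t_j).
\]
In the first term, $f(s_i)$ does not depend on $j$. Summing over $j$ first and using the row constraint $\sum_j P_{ij}=a_i$ turns it into $\sum_i a_i f(s_i)$. In the second term, $f(t_j)$ does not depend on $i$. Summing over $i$ first and using the column constraint $\sum_i P_{ij}=b_j$ turns it into $\sum_j b_j f(t_j)$. This proves the general identity.

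Second, for Equation~\ref{eq:coupling}, take $f=h_D$. This function is well defined on all of $\mathbb{R}^m$ because $D$ is non-empty and finite, so the maximum exists. By the definition in Equation~\ref{eq:score-integral}, $\sum_i a_i h_D(s_i)=\bar S(\mu_S,D)$ and $\sum_j b_j h_D(t_j)=\bar S(\mu_T,D)$. Substituting these into the identity gives the displayed formula.

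There is no real obstacle here. The only point needing care is to note that the identity holds for every plan in $U(a,b)$, not only for the optimal or entropic one. This is what later lets Proposition~\ref{prop:anyplan} apply the bound to the entropic plan $P^{\varepsilon}(a)$ and so obtain the $\mathcal{L}(\theta)$ inequality in Theorem~\ref{thm:w1}.
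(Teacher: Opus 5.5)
Your proof is correct and follows the paper's argument: you split the double sum by linearity, collapse the first term with the row marginal $\sum_j P_{ij}=a_i$ and the second with the column marginal $\sum_i P_{ij}=b_j$, and then substitute $f=h_D$ into the definition of $\bar S$. Your remark that the identity holds for every plan in $U(a,b)$, not only the optimal one, is exactly what the paper relies on in Proposition~\ref{prop:anyplan} and Appendix~\ref{app:feasibility}.
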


\begin{proof}
Summing $P_{ij}f(s_i)$ over $j$ first and using $\sum_j P_{ij}=a_i$ gives
$\sum_i a_i f(s_i)$. Summing $P_{ij}f(t_j)$ over $i$ first and using
$\sum_i P_{ij}=b_j$ gives $\sum_j b_j f(t_j)$.
\end{proof}

\begin{proposition}\label{prop:w1}
For every non-empty finite page $D$ on the unit sphere,
$\bigl|\bar S(\mu_S,D)-\bar S(\mu_T,D)\bigr|\le\W(\mu_S,\mu_T)$, where
$\W(\mu_S,\mu_T)=\min_{P\in U(a,b)}\sum_{ij}P_{ij}\,\rho(s_i,t_j)$ is the
1-Wasserstein distance under the chordal distance.
\end{proposition}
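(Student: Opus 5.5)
The plan is to combine the two lemmas just proved, applying them to an optimal plan for $\W$. Fix a non-empty finite page $D$. Since $U(a,b)$ is a compact polytope and the objective $\sum_{ij}P_{ij}\rho(s_i,t_j)$ is linear in $P$, the minimum defining $\W(\mu_S,\mu_T)$ is attained (Appendix~\ref{app:notation}). Let $P^\ast\in U(a,b)$ be a minimizer.

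Next, apply Lemma~\ref{lem:coupling} with $f=h_D$ and $P=P^\ast$. This gives Equation~\ref{eq:coupling}:
\[
\bar S(\mu_S,D)-\bar S(\mu_T,D)=\textstyle\sum_{i,j}P^\ast_{ij}\bigl(h_D(s_i)-h_D(t_j)\bigr).
\]
Take absolute values and use the triangle inequality. Because every entry $P^\ast_{ij}$ is nonnegative, the modulus passes inside the sum with the weights unchanged. This bounds the left-hand side by $\sum_{ij}P^\ast_{ij}\,|h_D(s_i)-h_D(t_j)|$.

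Then invoke Lemma~\ref{lem:lipschitz}, which bounds each summand by $P^\ast_{ij}\lVert s_i-t_j\rVert=P^\ast_{ij}\rho(s_i,t_j)$. Summing gives exactly the cost of $P^\ast$, which is $\W(\mu_S,\mu_T)$ by optimality.

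I expect no real obstacle here. The only points needing care are that nonnegativity of the plan justifies moving the absolute value inside the sum, and that the minimum is attained. Even without attainment, the argument works for every plan $P\in U(a,b)$ and can be closed by taking the infimum over plans. The unit-sphere hypothesis on $D$ is used only through $\lVert d\rVert=1$ inside Lemma~\ref{lem:lipschitz}.
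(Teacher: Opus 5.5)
Your proof is correct and follows the paper's argument: coupling identity, absolute value inside the nonnegative sum, then the 1-Lipschitz bound on $h_D$. The only difference is order. The paper bounds the score gap by the cost of an arbitrary plan $P\in U(a,b)$ and then minimizes over plans; you fix a minimizer $P^\ast$ first. As you note yourself, the two are equivalent.
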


\begin{proof}
Fix a page $D$ and any $P\in U(a,b)$. By Lemma~\ref{lem:coupling}, the
nonnegativity of $P$, and Lemma~\ref{lem:lipschitz},
\begin{equation}\label{eq:anyplan-linear}
\begin{aligned}
&\bigl|\bar S(\mu_S,D)-\bar S(\mu_T,D)\bigr|\\
&\quad=\Bigl|\textstyle\sum_{i,j}P_{ij}\bigl(h_D(s_i)-h_D(t_j)\bigr)\Bigr|\\
&\quad\le\textstyle\sum_{i,j}P_{ij}\bigl|h_D(s_i)-h_D(t_j)\bigr|\\
&\quad\le\textstyle\sum_{i,j}P_{ij}\,\rho(s_i,t_j).
\end{aligned}
\end{equation}
The left-hand side does not depend on $P$, so we may take the minimum of
the right-hand side over $U(a,b)$, which is $\W(\mu_S,\mu_T)$.
\end{proof}

\begin{proposition}\label{prop:anyplan}
For every $P\in U(a,b)$ and every page $D$,
\[
\begin{aligned}
&\bigl|\bar S(\mu_S,D)-\bar S(\mu_T,D)\bigr|\\
&\quad\le\textstyle\sum_{i,j}P_{ij}\,\rho(s_i,t_j)
\le\sqrt{2\,\langle P,C\rangle}.
\end{aligned}
\]
\end{proposition}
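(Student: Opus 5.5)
The first inequality is already contained in the argument for Proposition~\ref{prop:w1}: fixing the page $D$ and the plan $P\in U(a,b)$, Lemma~\ref{lem:coupling} rewrites the score difference as $\sum_{i,j}P_{ij}\bigl(h_D(s_i)-h_D(t_j)\bigr)$. The triangle inequality together with $P_{ij}\ge 0$ moves the absolute value inside the sum. Lemma~\ref{lem:lipschitz} then bounds each $|h_D(s_i)-h_D(t_j)|$ by $\rho(s_i,t_j)$. This is exactly the chain in Equation~\ref{eq:anyplan-linear}, stopped before minimizing over plans. So for the first step I will simply cite that display, now for an arbitrary plan $P$.

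For the second inequality I will apply Cauchy--Schwarz with respect to the weights $P_{ij}$, which form a probability vector on the index pairs (total mass one, entries nonnegative). Writing $\rho_{ij}=\rho(s_i,t_j)$, this is Jensen's inequality for the concave square root:
\[
\textstyle\sum_{i,j}P_{ij}\rho_{ij}
=\sum_{i,j}\sqrt{P_{ij}}\cdot\sqrt{P_{ij}}\,\rho_{ij}
\le\Bigl(\sum_{i,j}P_{ij}\Bigr)^{1/2}\Bigl(\sum_{i,j}P_{ij}\rho_{ij}^2\Bigr)^{1/2}.
\]
The first factor equals $1$ because every plan has total mass one. For the second factor I use the identity $c=\tfrac12\rho^2$ on the sphere, which follows from $\lVert x-y\rVert^2=2-2\langle x,y\rangle$ for unit vectors. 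It gives $\rho_{ij}^2=2C_{ij}$, hence $\sum_{i,j}P_{ij}\rho_{ij}^2=2\langle P,C\rangle$, and the bound $\sqrt{2\langle P,C\rangle}$ follows.

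No step is a real obstacle. The only points needing care are that the total mass of $P$ is exactly one, so no extra factor appears, and that the identity $c=\tfrac12\rho^2$ requires both $s_i$ and $t_j$ to be unit norm. The latter holds by the setting of Appendix~\ref{app:notation}: both token sets lie on $\mathbb{S}^{m-1}$, and the student's weights enter only through $a$, not through rescaled embeddings. Theorem~\ref{thm:w1} then follows by taking $P$ to be an optimal plan for $\mathrm{OT}_c$, and separately $P=P^{\varepsilon}(a(\theta))\in U(a,b)$, whose cost is $\mathcal{L}(\theta)\ge\mathrm{OT}_c$.
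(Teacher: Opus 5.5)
Your proposal is correct and matches the paper's proof essentially step for step. The first inequality is the plan-wise chain of Equation~\ref{eq:anyplan-linear} (coupling identity, triangle inequality using $P_{ij}\ge 0$, and the 1-Lipschitz property of $h_D$), and the second is Cauchy--Schwarz against the unit-mass weights $P_{ij}$ combined with $\rho^2=2c$ on the sphere.
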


\begin{proof}
The first inequality is Equation~\ref{eq:anyplan-linear}. For the second,
the entries of $P$ are nonnegative and sum to one, so Cauchy--Schwarz gives
\[
\begin{aligned}
&\textstyle\sum_{ij}P_{ij}\rho(s_i,t_j)\\
&\quad\le\textstyle\bigl(\sum_{ij}P_{ij}\bigr)^{1/2}
\bigl(\sum_{ij}P_{ij}\rho(s_i,t_j)^2\bigr)^{1/2},
\end{aligned}
\]
and $\rho^2=2c$ on the sphere turns the right-hand side into
$\sqrt{2\langle P,C\rangle}$.
\end{proof}

\begin{proof}[Proof of Theorem~\ref{thm:w1}]
Let $P^\ast\in U(a,b)$ attain $\mathrm{OT}_c(\mu_S,\mu_T)$, which exists
because $U(a,b)$ is compact and $\langle P,C\rangle$ is continuous.
Proposition~\ref{prop:anyplan} with $P=P^\ast$ gives the first
inequality. The entropic plan $P^{\varepsilon}(a)$ also lies in $U(a,b)$,
so $\mathrm{OT}_c(\mu_S,\mu_T)\le\langle P^{\varepsilon}(a),C\rangle
=\mathcal{L}(\theta)$, which gives the second.
\end{proof}

\begin{remark}
Two consequences are worth stating. First, a student with a low loss
scores every page nearly as its teacher does, and keeps the teacher's
order for any pair of pages whose teacher margin exceeds twice the bound
(Corollary~\ref{cor:minimax}). Second, the condition is sufficient rather
than necessary: we use it as a design principle rather than as a
predictor of retrieval quality (Appendix~\ref{app:bound}).
\end{remark}

\begin{remark}
The proofs use nothing about $D$ beyond its tokens having unit norm, so
both bounds hold uniformly, with the supremum over all pages, including
pages absent from training. Since $h_D$ is 1-Lipschitz by
Lemma~\ref{lem:lipschitz}, Proposition~\ref{prop:w1} is also an instance
of Kantorovich--Rubinstein duality \citep{villani2009optimal}. We give the
primal argument because Equation~\ref{eq:anyplan-linear} holds for every
plan, not only the optimal one, and Appendices~\ref{app:cost}
and~\ref{app:feasibility} rely on this.
\end{remark}

\subsection{Consequence for rankings}\label{app:corollary}

\begin{corollary}\label{cor:minimax}
For any pages $D$ and $D'$,
\begin{multline*}
\bigl|[\bar S(\mu_S,D)-\bar S(\mu_S,D')] \\
-[\bar S(\mu_T,D)-\bar S(\mu_T,D')]\bigr|
\;\le\;2\,\W(\mu_S,\mu_T).
\end{multline*}
In particular, if the teacher prefers $D$ to $D'$ by a margin larger than
$2\,\W(\mu_S,\mu_T)$, and a fortiori larger than
$2\sqrt{2\,\mathrm{OT}_c(\mu_S,\mu_T)}$, the student ranks them in the
same order.
\end{corollary}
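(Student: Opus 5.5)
The plan is to reduce the pairwise statement to Proposition~\ref{prop:w1} applied twice, once on each page, and then combine the two bounds with the triangle inequality. Write $\Delta(D)=\bar S(\mu_S,D)-\bar S(\mu_T,D)$. Rearranging the bracketed expression in the corollary shows that it equals $\Delta(D)-\Delta(D')$. By Proposition~\ref{prop:w1}, $|\Delta(D)|\le\W(\mu_S,\mu_T)$ and $|\Delta(D')|\le\W(\mu_S,\mu_T)$, since $D$ and $D'$ are both non-empty finite pages on the sphere. The triangle inequality then gives $|\Delta(D)-\Delta(D')|\le 2\,\W(\mu_S,\mu_T)$, which is the first claim.

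For the ranking consequence, suppose the teacher margin satisfies $\bar S(\mu_T,D)-\bar S(\mu_T,D')>2\,\W(\mu_S,\mu_T)$. The first claim yields
\[
\bar S(\mu_S,D)-\bar S(\mu_S,D')\;\ge\;\bigl[\bar S(\mu_T,D)-\bar S(\mu_T,D')\bigr]-2\,\W(\mu_S,\mu_T)\;>\;0,
\]
so the student also strictly prefers $D$ to $D'$. The ``a fortiori'' clause needs $\W(\mu_S,\mu_T)\le\sqrt{2\,\mathrm{OT}_c(\mu_S,\mu_T)}$. To get it, I take $P^\ast$ attaining $\mathrm{OT}_c$ and apply Proposition~\ref{prop:anyplan} to it, which gives $\sum_{ij}P^\ast_{ij}\rho(s_i,t_j)\le\sqrt{2\langle P^\ast,C\rangle}$. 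Since $\W$ is the minimum over all plans, $\W(\mu_S,\mu_T)\le\sum_{ij}P^\ast_{ij}\rho(s_i,t_j)$. A margin larger than $2\sqrt{2\,\mathrm{OT}_c}$ is therefore larger than $2\,\W$, and the previous argument applies.

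There is no real obstacle here. The only step needing care is the a fortiori comparison: Proposition~\ref{prop:anyplan} bounds the chordal cost of a \emph{given} plan, so it has to be evaluated at the $\mathrm{OT}_c$-optimal plan and then compared with the minimum defining $\W$. It should not be read as a direct inequality between the two optimal values obtained under the same plan.
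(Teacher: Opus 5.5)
Your proof is correct and follows the paper's argument: the triangle inequality with Proposition~\ref{prop:w1} applied to each page, then rearranging the margin inequality. For the a fortiori clause the paper cites Corollary~\ref{cor:chain}, whose proof is exactly your step of evaluating Proposition~\ref{prop:anyplan} at an $\mathrm{OT}_c$-optimal plan and comparing the result with the minimum that defines $\W$.
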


\begin{proof}
By the triangle inequality, the left-hand side is at most
$|\bar S(\mu_S,D)-\bar S(\mu_T,D)|+|\bar S(\mu_S,D')-\bar S(\mu_T,D')|$,
and Proposition~\ref{prop:w1} bounds each term by $\W(\mu_S,\mu_T)$. For
the second claim, the student's margin is then at least the teacher's
margin minus $2\,\W(\mu_S,\mu_T)$, which is positive; the \emph{a
fortiori} case follows from Corollary~\ref{cor:chain}.
\end{proof}

\subsection{Chordal and cosine bounds}\label{app:cost}

Theorem~\ref{thm:w1} and the training loss use the cosine cost $c$, while
Proposition~\ref{prop:w1} uses the chordal distance $\rho$. The entropy
term of Equation~\ref{eq:entropic-ot} only smooths the plan: the loss
keeps the linear cost $\langle P^{\varepsilon},C\rangle$, because that is
what Proposition~\ref{prop:anyplan} bounds, whereas the entropic value
$\langle P^{\varepsilon},C\rangle-\varepsilon H(P^{\varepsilon})$ can be
negative and then bounds nothing. Applying
Proposition~\ref{prop:anyplan} to the optimal plans and to the entropic
plan orders the resulting bounds, which Appendix~\ref{app:bound}
measures.

\begin{corollary}\label{cor:chain}
For every $\varepsilon>0$,
\begin{align*}
\sup_D\bigl|\bar S(\mu_S,D)-\bar S(\mu_T,D)\bigr|
&\le\W(\mu_S,\mu_T)\\
&\le\sqrt{2\,\mathrm{OT}_c(\mu_S,\mu_T)}\\
&\le\sqrt{2\,\langle P^{\varepsilon}(a),C\rangle}.
\end{align*}
\end{corollary}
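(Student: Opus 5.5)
The chain has three inequalities, and I would prove each one separately using results already established in this appendix.

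\emph{First inequality.} Proposition~\ref{prop:w1} gives $|\bar S(\mu_S,D)-\bar S(\mu_T,D)|\le\W(\mu_S,\mu_T)$ for every non-empty finite page $D$ on the sphere. The right-hand side does not depend on $D$, so the bound survives taking the supremum over all such pages.

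\emph{Second inequality.} I would not pass through the score difference here, since Theorem~\ref{thm:w1} only bounds that difference and not $\W$ itself. Instead, let $P^\ast\in U(a,b)$ attain $\mathrm{OT}_c(\mu_S,\mu_T)$; it exists by compactness, as noted in Appendix~\ref{app:notation}. Then:
\begin{itemize}
\item By definition of $\W$ as a minimum over $U(a,b)$, we have $\W(\mu_S,\mu_T)\le\sum_{ij}P^\ast_{ij}\rho(s_i,t_j)$.
\item The Cauchy--Schwarz step inside the proof of Proposition~\ref{prop:anyplan} applies to any plan of unit total mass. Together with $\rho^2=2c$, it bounds the right-hand side by $\sqrt{2\langle P^\ast,C\rangle}=\sqrt{2\,\mathrm{OT}_c(\mu_S,\mu_T)}$.
\end{itemize}
This is why the paper stressed that Equation~\ref{eq:anyplan-linear} and its Cauchy--Schwarz refinement hold for every plan, not only the optimal one.

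\emph{Third inequality.} For every $\varepsilon>0$, the entropic plan $P^{\varepsilon}(a)$ lies in $U(a,b)$. Hence $\mathrm{OT}_c(\mu_S,\mu_T)\le\langle P^{\varepsilon}(a),C\rangle$ by minimality, and the square root is monotone on $[0,\infty)$.

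Nonnegativity needs a brief check so that all the square roots are real. The cosine cost satisfies $c\ge0$ on the sphere, so both quantities under the roots are nonnegative.

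The only point needing care, and the main obstacle I expect, is making sure the second inequality compares $\W$ with $\mathrm{OT}_c$ directly, rather than relying on Theorem~\ref{thm:w1}. The two minima are attained at different plans. The fix is to evaluate the chordal objective at the cosine-optimal plan $P^\ast$, which is feasible for the $\W$ problem, rather than at the chordal-optimal plan. Everything else is a direct citation.
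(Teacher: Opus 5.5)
Your proof is correct and follows the paper's argument step for step. Proposition~\ref{prop:w1} gives the first inequality; evaluating the chordal objective at the $\mathrm{OT}_c$-optimal plan $P^\ast$ and applying Proposition~\ref{prop:anyplan} gives the second; and feasibility of $P^{\varepsilon}(a)$ in $U(a,b)$ gives the third, with your nonnegativity check on $c$ being a harmless detail the paper leaves implicit.
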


\begin{proof}
The first inequality is Proposition~\ref{prop:w1}. For the second, let
$P^\ast\in U(a,b)$ attain $\mathrm{OT}_c$; by the definition of $\W$ and
Proposition~\ref{prop:anyplan},
$\W\le\sum_{ij}P^\ast_{ij}\rho(s_i,t_j)\le\sqrt{2\,\mathrm{OT}_c}$. The
third holds because $P^{\varepsilon}(a)\in U(a,b)$, so
$\mathrm{OT}_c\le\langle P^{\varepsilon}(a),C\rangle$.
\end{proof}

\subsection{The plan returned by the solver}\label{app:feasibility}

Corollary~\ref{cor:chain} assumes a plan in $U(a,b)$. The solver of
Appendix~\ref{app:solver} stops after a column update, so the plan $\hat P$
it returns matches the teacher weights exactly,
$\hat P^{\top}\mathbf{1}=b$, while its row sums $\hat a=\hat P\mathbf{1}$
match $a$ only up to a truncation residual. The bound degrades gracefully
with this residual.

\begin{proposition}\label{prop:feasibility}
Let $\hat P\ge 0$ satisfy $\hat P^{\top}\mathbf{1}=b$, and let
$\hat a=\hat P\mathbf{1}$. For every page $D$,
\[
\bigl|\bar S(\mu_S,D)-\bar S(\mu_T,D)\bigr|
\le\sqrt{2\,\langle\hat P,C\rangle}+\lVert a-\hat a\rVert_1 .
\]
\end{proposition}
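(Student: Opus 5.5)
The plan is to insert the measure $\hat\mu_S=\sum_i \hat a_i\delta_{s_i}$ as an intermediate and split the score gap into two terms. The measure $\hat\mu_S$ has the same tokens as $\mu_S$ but carries the row sums of $\hat P$ as its weights. Since $\hat P\ge 0$ and $\hat P^{\top}\mathbf{1}=b$, the total mass of $\hat P$ is $\sum_j b_j=1$. So $\hat a$ lies in the simplex $\Delta_{K_s}$, and $\hat P\in U(\hat a,b)$ by construction.

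By the triangle inequality,
\[
\bigl|\bar S(\mu_S,D)-\bar S(\mu_T,D)\bigr|
\le\bigl|\bar S(\mu_S,D)-\bar S(\hat\mu_S,D)\bigr|
+\bigl|\bar S(\hat\mu_S,D)-\bar S(\mu_T,D)\bigr|.
\]

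For the second term, apply Proposition~\ref{prop:anyplan} with student weights $\hat a$ in place of $a$ and plan $\hat P$. That proposition, together with Lemma~\ref{lem:coupling} and Lemma~\ref{lem:lipschitz}, uses only that the plan is nonnegative, has the stated marginals and has total mass one. It never uses where $a$ came from. It therefore gives the bound $\sqrt{2\langle\hat P,C\rangle}$ directly.

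For the first term, the two measures share the same tokens $s_i$, so
\[
\bar S(\mu_S,D)-\bar S(\hat\mu_S,D)=\sum_i (a_i-\hat a_i)\,h_D(s_i).
\]
Since each $s_i$ is a unit vector, $|h_D(s_i)|\le 1$ by Cauchy--Schwarz (Appendix~\ref{app:notation}). Hölder's inequality then bounds this term by $\lVert a-\hat a\rVert_1$. Adding the two bounds gives the claim.

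The only step that needs care is checking that $\hat a$ is a genuine probability vector, so that $\hat\mu_S$ is a valid weighted token set and Proposition~\ref{prop:anyplan} applies verbatim. This follows from the exact column constraint, since nonnegativity plus $\hat P^{\top}\mathbf{1}=b$ with $b\in\Delta_{K_t}$ forces total mass one. Beyond that, the argument is routine.
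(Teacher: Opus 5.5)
Your proposal is correct and follows essentially the same route as the paper's proof. Both check that $\hat a\in\Delta_{K_s}$ from the exact column constraint and apply Proposition~\ref{prop:anyplan} to $\hat\mu_S$ with $\hat P\in U(\hat a,b)$. Both then bound the remaining term by $\lVert a-\hat a\rVert_1$ using $|h_D(s_i)|\le 1$ and combine the two with the triangle inequality.
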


\begin{proof}
The entries of $\hat a$ are nonnegative and sum to $\sum_j b_j=1$, so
$\hat a\in\Delta_{K_s}$ and $\hat P\in U(\hat a,b)$. Let
$\hat\mu_S=\sum_i\hat a_i\delta_{s_i}$. Proposition~\ref{prop:anyplan},
applied to $\hat\mu_S$ and $\hat P$, gives
$|\bar S(\hat\mu_S,D)-\bar S(\mu_T,D)|\le\sqrt{2\langle\hat P,C\rangle}$.
Moreover,
$|\bar S(\mu_S,D)-\bar S(\hat\mu_S,D)|
=|\sum_i(a_i-\hat a_i)h_D(s_i)|\le\lVert a-\hat a\rVert_1$, since
$|h_D(s_i)|\le 1$. The triangle inequality combines the two.
\end{proof}

The residual $\lVert a-\hat a\rVert_1$ vanishes as the Sinkhorn iterations
converge.

\subsection{Free student weights and the coverage loss}\label{app:semirelaxed}

Section~\ref{sec:marginals} motivates the weight head by asking what the
student weights would be if they were optimized together with the plan.
Keeping only the teacher-side constraint in this way gives a semi-relaxed
transport problem, and the following proposition solves it. The same
constraint explains why the learned weights do not collapse onto a single
student token: every teacher token must still receive its full mass, so
that token would have to align with all teacher tokens, including distant
ones, at a high cost.

\begin{proposition}\label{prop:semirelaxed}
Fix the student and teacher tokens, the teacher weights $b$, and
$\varepsilon>0$. Then
\begin{multline}\label{eq:semirelaxed}
\min_{a\in\Delta_{K_s}}\ \min_{P\in U(a,b)}\ \langle P,C\rangle-\varepsilon H(P)\\
=\min_{P\ge 0,\ P^{\top}\mathbf{1}=b}\ \langle P,C\rangle-\varepsilon H(P),
\end{multline}
and the right-hand problem is solved by $P^{\varepsilon}_{ij}=b_j\,\sigma^{\varepsilon}_{ij}$,
where
$\sigma^{\varepsilon}_{ij}=e^{-C_{ij}/\varepsilon}\big/\sum_k e^{-C_{kj}/\varepsilon}$,
with induced student weights $a^{\varepsilon}_i=\sum_j b_j\,\sigma^{\varepsilon}_{ij}$.
If every teacher token has a unique nearest student token, then as
$\varepsilon\to 0$, $\sigma^{\varepsilon}_{ij}\to\mathbf{1}[\,i=\arg\min_k C_{kj}\,]$;
the induced weights converge to the share of teacher mass whose nearest
student token is $i$; and both the optimal value and the transport cost
$\langle P^{\varepsilon},C\rangle$ converge to
$\sum_j b_j\min_i C_{ij}$.
\end{proposition}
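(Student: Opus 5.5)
The plan has three parts: collapse the double minimum, solve the relaxed problem column by column, then take $\varepsilon\to 0$.

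\textbf{Collapsing the double minimum.} The pairs $(a,P)$ with $a\in\Delta_{K_s}$ and $P\in U(a,b)$ are in bijection with the matrices $P\ge 0$ satisfying $P^{\top}\mathbf{1}=b$, via $a=P\mathbf{1}$. Such an $a$ is nonnegative and sums to $\sum_j b_j=1$, so it lies in $\Delta_{K_s}$. The objective depends on $P$ alone, so the two feasible sets of Equation~\ref{eq:semirelaxed} coincide and the two minima are equal. Moreover, every minimizer $P$ of the right-hand side yields a minimizing pair $(P\mathbf{1},P)$ on the left.

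\textbf{Solving the relaxed problem.} With $a$ free, both the objective and the constraint separate over the columns $j$. For each $j$ I minimize
\[
\textstyle\sum_i P_{ij}C_{ij}+\varepsilon\sum_i P_{ij}\log P_{ij}
\quad\text{subject to}\quad \sum_i P_{ij}=b_j,\ P_{ij}\ge 0.
\]
Writing $P_{ij}=b_j q_i$ with $q\in\Delta_{K_s}$, this becomes $b_j\bigl(\langle q,C_{\cdot j}\rangle+\varepsilon\sum_i q_i\log q_i\bigr)$ plus a constant. This is the standard Gibbs variational problem. The function is strictly convex, and either a Lagrange multiplier computation or the identity $\langle q,C\rangle+\varepsilon\,\mathrm{KL}$-form (objective $=\varepsilon\,\mathrm{KL}(q\,\|\,\sigma^\varepsilon_{\cdot j})-\varepsilon\log Z_j$) gives the unique minimizer $q=\sigma^{\varepsilon}_{\cdot j}$. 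Hence $P^{\varepsilon}_{ij}=b_j\sigma^{\varepsilon}_{ij}$, and summing over $j$ gives the induced weights $a^{\varepsilon}$. If $b_j=0$, the column is forced to be zero and the formula still holds.

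\textbf{The limit $\varepsilon\to 0$.} Fix $j$ and let $i^\ast$ be the unique minimizer of $C_{\cdot j}$, with gap $\delta_j>0$ to the next-smallest entry. Multiply the numerator and denominator of $\sigma^{\varepsilon}_{ij}$ by $e^{C_{i^\ast j}/\varepsilon}$. Every off-minimizer term is then at most $e^{-\delta_j/\varepsilon}\to 0$, so $\sigma^{\varepsilon}_{\cdot j}\to e_{i^\ast}$. Linearity then gives the convergence of $a^{\varepsilon}$ to the nearest-neighbor shares and the convergence of $\langle P^{\varepsilon},C\rangle$ to $\sum_j b_j\min_i C_{ij}$.

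For the optimal value, I would use the closed form of the column value, $-\varepsilon b_j\log\sum_k e^{-C_{kj}/\varepsilon}+\varepsilon b_j\log b_j$. This is a soft-min, squeezed between $b_j\min_i C_{ij}-\varepsilon b_j\log K_s$ and $b_j\min_i C_{ij}$, up to the $\varepsilon b_j\log b_j$ term, which vanishes. Alternatively, the bound $0\le H(P)\le\log(K_sK_t)$ shows that the entropic term is $O(\varepsilon)$.

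The main obstacle is bookkeeping rather than depth. The care goes into the column-wise Gibbs minimization with the scaling by $b_j$, which must correctly account for the $\varepsilon b_j\log b_j$ constant. It also goes into handling uniqueness, which requires strict convexity of negative entropy on the simplex.
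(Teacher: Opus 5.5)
Your proposal is correct and follows the paper's proof essentially step for step. Both arguments identify the two feasible sets via $a=P\mathbf{1}$, reduce column-wise to a Gibbs problem with value $-\varepsilon b_j\log\sum_k e^{-C_{kj}/\varepsilon}+\varepsilon b_j\log b_j$, and take the softmax-concentration limit. Your KL identity and explicit soft-min squeeze are slightly tidier than the paper's Lagrangian stationarity step: they give uniqueness and a strictly positive minimizer directly, and they show the optimal value converges even without a unique nearest token.
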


\begin{proof}
Every $P\ge 0$ with $P^{\top}\mathbf{1}=b$ has nonnegative row sums that
add up to one, so $P\in U(P\mathbf{1},b)$; conversely, every element of
some $U(a,b)$ satisfies these constraints. The two feasible sets therefore
coincide, which gives Equation~\ref{eq:semirelaxed}. The right-hand
objective separates over teacher tokens: column $j$ solves
\[
\min_{p\ge 0,\ \sum_i p_i=b_j}\ \textstyle\sum_i p_iC_{ij}+\varepsilon\sum_i p_i\log p_i ,
\]
a strictly convex problem. Stationarity of its Lagrangian gives
$C_{ij}+\varepsilon(1+\log p_i)=\lambda_j$, so $p_i\propto e^{-C_{ij}/\varepsilon}$,
and normalization gives $p_i=b_j\sigma^{\varepsilon}_{ij}$. Summing over
$j$ gives $a^{\varepsilon}$. Substituting back, the optimal value of column
$j$ is
$-\varepsilon\,b_j\log\sum_i e^{-C_{ij}/\varepsilon}+\varepsilon\,b_j\log b_j$.
As $\varepsilon\to 0$ with a unique minimizer, the softmax concentrates on
$\arg\min_i C_{ij}$, the first term tends to $b_j\min_i C_{ij}$, and the
second vanishes; the transport cost $\sum_i b_j\sigma^{\varepsilon}_{ij}C_{ij}$
has the same limit.
\end{proof}

With uniform teacher weights $b_j=1/K_t$, the limiting weights are the
nearest-neighbor shares
\begin{equation}\label{eq:nn-weights}
a^{\mathrm{NN}}_i\;=\;\tfrac{1}{K_t}\,\bigl\lvert\bigl\{\,j :\ i=\arg\max_k\langle s_k,t_j\rangle\bigr\}\bigr\rvert ,
\end{equation}
so a student token that covers several teacher tokens receives
proportionally more mass, and one that covers none receives none. The
limiting loss is
\begin{equation}\label{eq:coverage}
\mathcal{L}_{\mathrm{cov}}
=\frac{1}{K_t}\sum_{j=1}^{K_t}\Bigl(1-\max_i\langle s_i,t_j\rangle\Bigr),
\end{equation}
the coverage objective of Section~\ref{sec:mechanism}. Coverage is
therefore the $\varepsilon\to 0$ limit of transport with free student
weights, trained against the implicit weights $a^{\mathrm{NN}}$ but
deployed with uniform ones. Both $a^{\mathrm{NN}}$ and its soft version
$a^{\varepsilon}$ depend on the teacher's tokens and cannot be computed at
inference, which is why OTW predicts the weights from the query. If a
teacher token has several nearest student tokens, the limit of
$\sigma^{\varepsilon}$ splits its mass evenly among them, and
$a^{\mathrm{NN}}$ is defined accordingly.

\subsection{Weighted scoring at inference}\label{app:inference}

\begin{proposition}\label{prop:inference}
Let $a_i>0$ for every $i$, and let $\tilde s_i=a_is_i$. For every page $D$,
\[
\textstyle\sum_i\max_{d\in D}\langle\tilde s_i,d\rangle=\bar S(\mu_S,D).
\]
Consequently, the mean over the student's tokens,
$\bar S(\mu_S,D)/K_s$, ranks pages identically. Likewise, standard MaxSim
on the teacher's tokens equals $K_t\,\bar S(\mu_T,D)$ and ranks pages as
$\bar S(\mu_T,D)$ does.
\end{proposition}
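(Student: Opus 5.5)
The plan is to prove the identity term by term and then read off the two ranking claims from positive rescaling.

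\textbf{The identity.} Fix a page $D$ and a student token $i$. Since $a_i>0$, the map $u\mapsto a_i u$ is strictly increasing on $\mathbb{R}$, so it commutes with the maximum over the finite non-empty set $D$. Concretely, I would argue directly: for every $d\in D$, bilinearity gives $\langle \tilde s_i,d\rangle=a_i\langle s_i,d\rangle\le a_i h_D(s_i)$, with equality at any $d^\ast\in\arg\max_{d\in D}\langle s_i,d\rangle$, which exists because $D$ is finite. Hence $\max_{d\in D}\langle\tilde s_i,d\rangle=a_i h_D(s_i)$. Summing over $i$ and using Equation~\ref{eq:score-integral} with $\omega=a$ yields $\sum_i a_i h_D(s_i)=\bar S(\mu_S,D)$.

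\textbf{Student ranking claim.} Dividing by the constant $K_s>0$, which is the same for every page, preserves the order of the scores $\bar S(\mu_S,D)$ across pages $D$. So the mean over the student's tokens ranks pages identically.

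\textbf{Teacher claim.} Standard MaxSim on the teacher's tokens is $\sum_j h_D(t_j)$. With $b_j=1/K_t$, this equals $K_t\sum_j b_j h_D(t_j)=K_t\,\bar S(\mu_T,D)$. Multiplying by the positive constant $K_t$ again preserves the ordering across pages.

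\textbf{Main obstacle.} There is no real obstacle. The only point needing care is that positivity of $a_i$ is what lets the weight move out of the maximum: a negative weight would turn the max into a min, and a zero weight would still be fine but is excluded by the softmax anyway. I would remark that the softmax in Equation~\ref{eq:weights} always produces strictly positive weights, so the hypothesis holds in practice.
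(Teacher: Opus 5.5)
Your proposal is correct and follows the paper's proof essentially step for step. You pull the positive weight $a_i$ out of the maximum, sum over tokens to get $\bar S(\mu_S,D)$, and note that multiplying or dividing all of a query's page scores by the same positive constant ($K_s$ or $K_t$) preserves the ranking; your side remark that zero weights would also be harmless is accurate but not needed.
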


\begin{proof}
Since $a_i>0$,
$\max_{d\in D}\langle a_is_i,d\rangle=a_i\max_{d\in D}\langle s_i,d\rangle
=a_i\,h_D(s_i)$; summing over $i$ gives $\bar S(\mu_S,D)$. For the teacher,
$\sum_j h_D(t_j)=K_t\sum_j\frac{1}{K_t}h_D(t_j)$. Multiplying every score
of a query by the same positive constant preserves the ranking.
\end{proof}

The softmax head of Equation~\ref{eq:weights} produces strictly positive
weights, so the proposition applies. Renormalizing $\tilde s_i$ before
scoring maps it back to $s_i$ and yields uniform-weight MaxSim, which is
why the weights must be kept in the norms of the query vectors
(Section~\ref{sec:inference}).

\section{Implementation Details}\label{app:impl}

This appendix gives the details needed to reproduce training and
evaluation, in the order in which the main text uses them: the cached
teacher query tokens that the loss consumes (Section~\ref{sec:marginals}),
the student and its optimization (Sections~\ref{sec:inference}
and~\ref{sec:setup}), the Sinkhorn solver (Section~\ref{sec:solver}), the
training data, and the evaluation protocol (Section~\ref{sec:setup}).

\begin{figure*}[t]
\centering
\begin{minipage}{0.94\textwidth}
\hrule\smallskip
\captionof{algorithm}{OTW training (one query; batches average the loss over queries)}
\label{alg:otw}
\smallskip\hrule\smallskip
\begin{algorithmic}[1]
\Require query text $Q$; cached teacher query tokens $t_1,\dots,t_{K_t}$
  (L2-normalized, from the frozen teacher); student encoder with parameters
  $\theta$: backbone $E_\theta$, projection $W_{\mathrm{proj}}$, weight head
  $w_\theta$; regularization $\varepsilon = 0.05$; iterations $N = 50$
\State $z_1,\dots,z_{K_s} \gets E_\theta(Q)$
  \Comment{token states; special tokens masked out}
\State $s_i \gets W_{\mathrm{proj}} z_i \,/\, \lVert W_{\mathrm{proj}} z_i \rVert_2$
  \Comment{student tokens on the sphere, $i = 1,\dots,K_s$}
\State $\log a \gets \operatorname{log\,softmax}\bigl(w_\theta(z_1),\dots,w_\theta(z_{K_s})\bigr)$
  \Comment{learned student weights (Eq.~\ref{eq:weights})}
\State $\log b_j \gets -\log K_t$ for all $j$
  \Comment{uniform teacher weights}
\State $C_{ij} \gets 1 - \langle s_i, t_j \rangle$
  \Comment{cosine cost, float32}
\State $f \gets \mathbf{0}$;\quad $g \gets \mathbf{0}$
\For{$n = 1,\dots,N$} \Comment{without gradient tracking}
  \State $f_i \gets \varepsilon \log a_i - \varepsilon \log\textstyle\sum_j
    \exp\bigl((g_j - C_{ij})/\varepsilon\bigr)$
    \Comment{enforces $P\mathbf{1} = a$}
  \State $g_j \gets \varepsilon \log b_j - \varepsilon \log\textstyle\sum_i
    \exp\bigl((f_i - C_{ij})/\varepsilon\bigr)$
    \Comment{enforces $P^{\top}\mathbf{1} = b$}
\EndFor
\State recompute one $(f, g)$ update pair \emph{inside the autograd graph}
  \Comment{one-step gradient}
\State $P_{ij} \gets \exp\bigl((f_i + g_j - C_{ij})/\varepsilon\bigr)$
  \Comment{transport plan (Eq.~\ref{eq:plan-recovery})}
\State $\mathcal{L} \gets \textstyle\sum_{ij} P_{ij}\, C_{ij}$
  \Comment{gradients flow through $C$ and $\log a$}
\State update $\theta$ by AdamW on $\mathcal{L}$
\Statex \emph{Inference:} scale each student token by its weight $a_i$ and
  run MaxSim against the teacher's document index, unchanged
  (Proposition~\ref{prop:inference}).
\end{algorithmic}
\smallskip\hrule
\end{minipage}
\end{figure*}

\subsection{Teacher caches and query tokenization}\label{app:tokens}

\paragraph{Teacher caches.}
Teacher embeddings are precomputed once in float16 at each teacher's own
width (320 or 640 dimensions). For ColQwen3.5 we cache the 711{,}603
training pairs (queries and page images), the 777{,}649 translated query
variants, and all evaluation query and corpus sets. We cache the same
parts for Tomoro-ColQwen3-8B, the second teacher of the objective
comparison; for the other three teachers of Table~\ref{tab:main} we cache
the same queries and evaluation sets and no training page. Those teachers are Tomoro-ColQwen3-8B (8.8B
parameters, 320 dimensions), Vultron-4.5B (4.5B, 320), and
ColVec1.1-4b and -8b (4.5B and 8.4B, 640). For every teacher, tokens are
retained by nonzero norm, which keeps the ColBERT-style query augmentation
tokens and drops the padding rows a model may emit inside the attention
mask. Visual inputs use each teacher's own processor: ColQwen3.5 caps a
page at 768 visual tokens (749 per page on average over the evaluation
corpora); Tomoro-ColQwen3-8B caps at 1{,}280 and Vultron-4.5B at 1{,}792, caps
the evaluation pages do not reach, so both encode a page into 1{,}227
tokens on average; and the two ColVec models resize pages to 1{,}792
tokens (1{,}709 on average). For the ViDoRe v1 datasets, corpora are
deduplicated by image hash in first-appearance order, and cached
embeddings follow the same order as evaluation.

\paragraph{Token counts.}
On the training queries, ColQwen3.5 produces 28.6 tokens per query on
average against 17.1 for the student. Over the evaluation suites, the
teacher averages 27.9/31.5/37.2 tokens on v1/v2/v3 against 16.5/27.2/33.9
for the student. Excluding the teacher's ten augmentation tokens, the
student produces 0.92/1.26/1.25 times as many tokens as the teacher on
average.

\paragraph{Extrapolated cache for ColVec1.1.} ColVec1.1 encodes a page into 1{,}792 visual tokens (1{,}709 on average over the evaluation corpora) at 640 dimensions, about 2.1\,MiB per page in float16. Assuming one distinct page per pair, one million training pairs would therefore require about 2.0\,TiB of page cache, whereas its measured query cache (67\,GiB for 1{,}489{,}252 queries) corresponds to about 45\,GiB per million queries, a ratio of roughly 45. These figures are estimates; we did not cache training pages for this teacher.

\subsection{Student architecture and optimization}\label{app:training}

Each student is an Ettin encoder \citep{weller2026seqvsseqopen} with the
two heads of Section~\ref{sec:inference}: a bias-free linear projection to
the teacher's width followed by L2 normalization, and a weight head that is
a single linear layer to one logit per token. The tokenizer's [CLS],
[SEP], and padding positions are masked; every other position inside the
attention mask is a valid token.

\paragraph{Optimization.}
Every run trains for 10 epochs with an effective batch size of 1{,}024
(128 per step, gradient accumulation 4, two H200 GPUs), using AdamW with
weight decay $0.01$, a peak learning rate of $3{\times}10^{-4}$ under a
one-cycle cosine schedule with $3\%$ warmup, and mixed precision. The
transport objective uses $\varepsilon=0.05$ with 50 log-domain Sinkhorn
iterations and a full-precision cost matrix (Appendix~\ref{app:solver});
sensitivity to $\varepsilon$ is reported in Appendix~\ref{app:eps}.
Listwise KL normalizes MaxSim scores by query length so that a fixed
temperature stays calibrated; it uses temperatures of 0.07 for the teacher and 0.05 for the
student, InfoNCE a temperature of 0.05, and both take as negatives the
other pages of the same 128-query micro-batch. All results are single runs
with seed 42.

A run takes under 30 wall-clock hours on two H200 GPUs. Per epoch, OTW
trains at a speed comparable to Listwise KL and InfoNCE. The Sinkhorn
iterations of Appendix~\ref{app:solver} act on small per-query matrices,
roughly $17\times29$ on the training queries with ColQwen3.5 and at most
$17\times36$ with the other teachers.

\begin{table*}[t]
\centering
\small
\begin{tabular}{lrrr}
\toprule
Part & Items & Token vectors & Size \\
\midrule
queries, base       & 711{,}603 & 20{,}388{,}709 & 12.2\,GiB \\
queries, translated & 777{,}649 & 25{,}398{,}098 & 15.1\,GiB \\
page images         & 711{,}603 & 529{,}150{,}162 & 315.4\,GiB \\
\bottomrule
\end{tabular}
\caption{ColQwen3.5 teacher cache written to disk for training (float16,
320-dimensional tokens). The translated variants add queries only.}
\label{tab:cache}
\end{table*}

\subsection{Sinkhorn solver}\label{app:solver}

\paragraph{Iterations.}
The solution of Equation~\ref{eq:entropic-ot} has the form
$P^{\varepsilon}_{ij}=u_i\,e^{-C_{ij}/\varepsilon}\,v_j$, and the Sinkhorn
algorithm finds it by alternately rescaling the rows and the columns of
$e^{-C/\varepsilon}$ until they sum to $a$ and $b$: a softmax over
similarities, normalized along both axes. For numerical stability at small
$\varepsilon$ we run it in the log domain, with $\varepsilon=0.05$ and 50
iterations, and compute the cost matrix $C_{ij}=1-\langle s_i,t_j\rangle$
in float32 under mixed-precision training. The solver maintains potentials
$f\in\mathbb{R}^{K_s}$ and $g\in\mathbb{R}^{K_t}$, initialized at zero and
alternately updated as
\begin{equation}\label{eq:sinkhorn-updates}
\begin{aligned}
f_i &\leftarrow \varepsilon \log a_i
- \varepsilon \log \textstyle\sum_j \exp\bigl(\tfrac{g_j - C_{ij}}{\varepsilon}\bigr),\\
g_j &\leftarrow \varepsilon \log b_j
- \varepsilon \log \textstyle\sum_i \exp\bigl(\tfrac{f_i - C_{ij}}{\varepsilon}\bigr),
\end{aligned}
\end{equation}
with numerically stable log-sum-exp reductions. The plan is recovered in
closed form,
\begin{equation}\label{eq:plan-recovery}
P^{\varepsilon}_{ij} \;=\;
\exp\bigl(\tfrac{f_i + g_j - C_{ij}}{\varepsilon}\bigr).
\end{equation}
Each update enforces its own constraint exactly: after the $f$ update,
$P\mathbf{1}=a$; after the $g$ update, $P^{\top}\mathbf{1}=b$. Masked
positions carry zero cost and negative infinity in the log-marginals.

\paragraph{Gradients.}
The updates of Equation~\ref{eq:sinkhorn-updates} run without gradient
tracking. One final update pair is then recomputed inside the autograd
graph from the converged potentials, and the loss of
Equation~\ref{eq:final-objective} is evaluated from
Equation~\ref{eq:plan-recovery}. Gradients reach the student through two
inputs: the cost matrix $C$, a function of the token embeddings $s_i$, and
the log-marginal $\log a$, a function of the weight-head logits. This
one-step scheme is a truncation of implicit Sinkhorn differentiation
\citep{eisenberger2022unifiedframeworkimplicitsinkhorn}. By an envelope
argument, it is exact for the gradient of the entropic optimal value,
whereas the exact gradient of the transport cost
$\langle P^{\varepsilon},C\rangle$ would require solving an additional
linear system at the fixed point
\citep{luise2018differentialpropertiessinkhornapproximation}. We use the
one-step form, which avoids backpropagating through all 50 iterations and
which we found sufficient in development runs. Because each pair ends with the $g$ update, the
returned plan satisfies the column constraint exactly and the row
constraint up to a truncation residual, whose effect on the bound
Proposition~\ref{prop:feasibility} controls. Algorithm~\ref{alg:otw}
summarizes the procedure.

\subsection{Training data}\label{app:data}

The 711{,}603 base pairs comprise the VisRAG synthetic set (234K, 32.9\%)
and the VisRAG in-domain set (94K, 13.2\%)
\citep{yu2025visragvisionbasedretrievalaugmentedgeneration}, the VDR
multilingual set covering five languages (275K, 38.6\%)
\citep{cimolai2025vdr}, and the ColPali training set (109K, 15.3\%)
\citep{faysse2025colpaliefficientdocumentretrieval}. The 777{,}649 translated variants are machine translations of base
queries that reuse the corresponding page images.

\begin{table}[t]
\centering
\small
\begin{tabular}{lccc}
\toprule
Student & v1 & v2 & v3 \\
\midrule
\multicolumn{4}{l}{\textit{ColQwen3.5-4.5B teacher}} \\
\quad Ettin-32M  & 89.8 (98.0) & 55.2 (86.7) & 51.0 (87.0) \\
\quad Ettin-68M  & 90.4 (98.7) & 59.2 (92.9) & 54.0 (92.0) \\
\quad Ettin-150M & 90.7 (99.0) & 60.0 (94.2) & 55.1 (93.8) \\
\quad Ettin-400M & 91.2 (99.5) & 62.1 (97.4) & 56.7 (96.5) \\
\multicolumn{4}{l}{\textit{Tomoro-ColQwen3-8B teacher}} \\
\quad Ettin-32M  & 88.8 (98.0) & 56.7 (87.2) & 51.3 (86.9) \\
\quad Ettin-68M  & 89.8 (99.1) & 59.6 (91.7) & 54.1 (91.6) \\
\quad Ettin-150M & 90.0 (99.3) & 60.6 (93.3) & 54.9 (93.0) \\
\quad Ettin-400M & 90.3 (99.6) & 62.5 (96.1) & 56.4 (95.5) \\
\bottomrule
\end{tabular}
\caption{Retention across the Ettin family for both ablation teachers
(NDCG@5; \% retention of that teacher).}
\label{tab:scaling}
\end{table}

\begin{table*}[t]
\centering
\small
\setlength{\tabcolsep}{5pt}
\begin{tabular}{lrccccccccc}
\toprule
 & & \multicolumn{3}{c}{Uncompressed ($f=1$)} & \multicolumn{3}{c}{$f=3$} & \multicolumn{3}{c}{$f=9$} \\
\cmidrule(lr){3-5}\cmidrule(lr){6-8}\cmidrule(lr){9-11}
Dataset & Pages & Teacher & Student & Ret. & Teacher & Student & Ret. & Teacher & Student & Ret. \\
\midrule
Finance (en) & 2{,}942 & 66.9 & 63.3 & 94.6 & 66.5 & 62.8 & 94.5 & 66.3 & 62.8 & 94.8 \\
Finance (fr) & 2{,}384 & 48.9 & 47.4 & 96.8 & 48.8 & 47.2 & 96.8 & 48.2 & 46.1 & 95.8 \\
Computer sci. & 1{,}360 & 78.1 & 74.2 & 95.1 & 77.9 & 74.2 & 95.2 & 77.6 & 73.5 & 94.8 \\
Human res. & 1{,}110 & 64.8 & 61.6 & 95.1 & 64.7 & 61.6 & 95.2 & 64.5 & 61.0 & 94.6 \\
Energy & 2{,}225 & 66.4 & 65.2 & 98.2 & 66.6 & 64.8 & 97.4 & 65.9 & 63.8 & 96.9 \\
Industrial & 5{,}244 & 54.8 & 51.4 & 93.8 & 55.2 & 51.4 & 93.1 & 54.8 & 51.1 & 93.2 \\
Pharmaceutical & 2{,}313 & 65.3 & 64.0 & 97.9 & 65.2 & 63.9 & 97.9 & 64.4 & 63.0 & 97.8 \\
Physics & 1{,}674 & 47.9 & 45.5 & 95.0 & 48.0 & 45.9 & 95.8 & 47.3 & 44.6 & 94.3 \\
\midrule
\textit{Average} & 19{,}252 & 61.6 & 59.1 & 95.8 & 61.6 & 59.0 & 95.7 & 61.1 & 58.2 & 95.3 \\
\bottomrule
\end{tabular}
\caption{Index compression on ViDoRe v3 (NDCG@5; retention in \%). The
ColVec1.1-4b index is pooled with hierarchical token pooling at pool factor
$f$, and the teacher and its 149M student are scored against the same
pooled index. Average vectors per page: 1{,}764 ($f=1$), 587 ($f=3$), 195
($f=9$).}
\label{tab:compression}
\end{table*}

\subsection{Evaluation}\label{app:eval}

Reference systems and teachers in Table~\ref{tab:main} are the following
public checkpoints on HuggingFace:
\par\smallskip\noindent{\footnotesize\raggedright\ttfamily
TomoroAI/tomoro-colqwen3-embed-8b\\
athrael-soju/colqwen3.5-4.5B-v3\\
vultr/VultronRetrieverCore-Qwen3.5-4.5B\\
webAI-Official/webAI-ColVec1.1-4b\\
webAI-Official/webAI-ColVec1.1-8b\\
TomoroAI/tomoro-colqwen3-embed-4b\\
nomic-ai/colnomic-embed-multimodal-7b\\
MrLight/dse-qwen2-2b-mrl-v1\\
vidore/colpali-v1.3\\
ModernVBERT/colmodernvbert\\
nanovdr/NanoVDR-Q-DistilBERT-\\\hspace*{1em}Qwen3VL2B-2048-ML\par}\smallskip
NDCG@5 is computed with \texttt{pytrec\_eval} and averaged per benchmark
over 10 (v1), 4 (v2), and 8 (v3) datasets. Student scores use mean-MaxSim
over query tokens with each token scaled by its weight, which realizes
$\bar S(\mu_S,D)/K_s$ and ranks pages as $\bar S(\mu_S,D)$ does
(Proposition~\ref{prop:inference}). Teacher ceilings are computed from the
same cached embeddings and index.

\section{Efficiency and Capacity}\label{app:efficiency}

This appendix reports the measurements behind Sections~\ref{sec:results} and~\ref{sec:compression}.

\subsection{Query-encoding cost}\label{app:latency}

Table~\ref{tab:serving} reports the measurements and
Figure~\ref{fig:teaser}a plots its CPU column as throughput. All systems are measured on the same
node: CPU numbers on an Intel Xeon Platinum 8562Y+ with a single thread,
float32, and batch size 1; GPU numbers on one H200 in bf16. Each number is
the median over 20 queries after 3 warmup runs, and none includes MaxSim
scoring. For our students the document index is identical to the teacher's by
construction. ColQwen3.5's GPU number
reflects the torch fallback for its hybrid linear-attention layers.
ColNanoVDR students of the same size share the encoder and differ only in
projection width, so Table~\ref{tab:serving} reports one row per size.

\subsection{Supervision cost}\label{app:cache}

Table~\ref{tab:cache} measures the caches that the objectives read during
training. Document tokens dominate: the 711{,}603 page images produce 26
times as many cached token vectors as their queries, because a page is
encoded into up to 768 visual tokens plus prompt text while a query is a few
dozen tokens. Document-dependent objectives read the page part at every step
(Listwise KL also the query part); the document-free objectives read only
the query part, which is 8.0\% of the
cache. Including the translated variants, the page cache is 11.6 times the
size of the query cache, so score distillation reads 12.6 times as much
cached teacher data as the document-free objectives (342.7 against
27.3\,GiB), the ratio quoted in Section~\ref{sec:results}.

Encoding the training set with ColQwen3.5 ran as eight shard jobs of 2.4
to 4.0 hours each on one H200 (23 GPU-hours in total), with page tokens
making up 96\% of the tokens written. These jobs encoded queries and pages
together; the query-only cost is bounded by a separate job that encoded
the 777{,}649 translated queries and all evaluation sets in 2 hours 2
minutes on one H200. The pages involved are those of the distillation set,
not the deployment index, which the teacher encodes for retrieval
regardless of how the student is trained. A new teacher therefore pays the
page-side cost again before score distillation can start, and only the
query-side cost for the document-free recipe. For the three teachers
cached without pages, encoding the 1{,}489{,}252 training and translated
queries took 1.0 to 1.7 hours on one H200 each, model loading included,
and the resulting query caches occupy 27\,GiB at 320 dimensions and
67\,GiB at 640.

\subsection{Student capacity}\label{app:capacity}

Table~\ref{tab:scaling} lists the retention numbers behind
Figure~\ref{fig:efficiency_scale}.

\subsection{Index compression}\label{app:compression}

Table~\ref{tab:compression} lists the numbers behind
Section~\ref{sec:compression}. The student is the 149M ColVec1.1-4b
student of Table~\ref{tab:main}, and the index is that teacher's cached
ViDoRe v3 corpus embeddings: 19{,}252 pages at 640 dimensions, 1{,}764
tokens per page on average (33.95M vectors in total).

\paragraph{Pooling.}
We use the hierarchical token pooling of \citet{clavié2024reducingfootprintmultivectorretrieval},
through \texttt{HierarchicalTokenPooler} in the \texttt{colpali-engine}
library with its default settings. For each page separately, it computes the
cosine distances $1-\langle d_k,d_l\rangle$ between the page's tokens, builds
a Ward agglomerative clustering on them, and cuts it into at most
$\lfloor n/f\rfloor$ clusters, where $n$ is the page's token count and $f$
the pool factor. Each cluster is replaced by the mean of its tokens,
renormalized to unit length. Pages are pooled independently, so no token
is shared across pages. The pooled index keeps 11.31M vectors at $f=3$ (587
per page) and 3.76M at $f=9$ (195 per page), realized compression ratios of
3.00 and 9.02. Pooling is applied to the index once, offline, in float32
on CPU; it touches neither encoder.

\paragraph{Scoring.}
Teacher queries are the teacher's cached query embeddings; student queries
are encoded by the student and scaled by their weights, as in
Appendix~\ref{app:eval}. Both are scored with exhaustive MaxSim against the
same pooled index, with no approximate search, and evaluated with the
protocol of Appendix~\ref{app:eval}. Retention is the student's NDCG@5
divided by the teacher's on the same pooled index. The uncompressed rows
($f=1$) reproduce the ColVec1.1-4b entries of Table~\ref{tab:main}.

\section{Analyses of the Training Objectives}\label{app:mechanism}

This appendix asks where the advantage of OTW over the fixed-weight
alternatives of Section~\ref{sec:mechanism} comes from.
Appendix~\ref{app:weighted-kl} tests whether the weight head also helps
score distillation, and Appendix~\ref{app:twostage} a two-stage alternative in which the
weights are trained after the embeddings.
Appendices~\ref{app:protocol}--\ref{app:eps} then fix trained students and
vary only the weights used at inference, to measure how much the deployed
weights matter, how they compare with teacher-derived oracle weights, and
how sensitive both are to $\varepsilon$. Every student here is distilled
from ColQwen3.5.

\subsection{Weighted score distillation}\label{app:weighted-kl}

The weight head is part of OTW but not of score distillation, so a
learned-weight variant of Listwise KL separates the objective from the
architecture. Training Listwise KL with the same head, its weights scaling the
student's tokens inside the MaxSim score as at inference, gives
90.9/60.9/54.9 on v1/v2/v3 against 90.9/61.2/55.0 for uniform-weight Listwise KL:
within 0.3 points on every benchmark, and no higher on any. The
head therefore does not transfer the gain it produces under OTW
(Table~\ref{tab:objectives}) to score distillation, which is why
Table~\ref{tab:objectives} compares the two families at their own settings
rather than crediting the head to both.

\begin{table}[t]
\centering
\footnotesize\setlength{\tabcolsep}{3.5pt}
\begin{tabular}{lccc}
\toprule
Head target & v1 & v2 & v3 \\
\midrule
hard assignment & 90.7 (99.0) & 60.0 (94.1) & 55.1 (93.9) \\
soft assignment & 90.7 (99.0) & 60.1 (94.3) & 55.1 (93.9) \\
OTW, joint      & 90.7 (99.0) & 60.0 (94.2) & 55.1 (93.8) \\
\bottomrule
\end{tabular}
\caption{Two-stage weight heads on frozen coverage geometry, 149M student
(NDCG@5; \% retention). The OTW row is the jointly trained reference.}
\label{tab:twostage}
\end{table}

\begin{table*}[t]
\centering
\small
\begin{tabular}{lcccccc}
\toprule
& \multicolumn{3}{c}{149M student} & \multicolumn{3}{c}{395M student} \\
Weights & v1 & v2 & v3 & v1 & v2 & v3 \\
\midrule
Learned (default)   & 90.7 & 60.0 & 55.1 & 91.2 & 62.1 & 56.7 \\
Dead tokens pruned  & 90.7 & 60.0 & 54.9 & 91.1 & 62.2 & 56.4 \\
Uniform             & 89.1 & 56.5 & 51.6 & 89.8 & 60.8 & 54.8 \\
Inverted (dead only)& 73.8 & 27.0 & 26.2 & 80.0 & 34.3 & 31.2 \\
\bottomrule
\end{tabular}
\caption{Inference-time weight interventions (NDCG@5): embeddings fixed,
only the per-token weights replaced; dead tokens as defined in
Appendix~\ref{app:protocol}.}
\label{tab:intervention}
\end{table*}

\subsection{Two-stage weight heads}\label{app:twostage}

Table~\ref{tab:twostage} freezes the encoder of the 149M student trained with coverage
and trains only a weight head for two epochs, supervised per query by
either the hard assignment (cross-entropy to the nearest-neighbor shares of
Equation~\ref{eq:nn-weights}) or the soft assignment $a^{\varepsilon}$ of
Proposition~\ref{prop:semirelaxed} at $\varepsilon=0.05$. Either head
reproduces the jointly trained OTW student to within 0.1 points on
ColQwen3.5. This two-stage recipe is the only configuration in this paper
that is not trained in a single stage, which is why
Table~\ref{tab:objectives}, whose variants are all single-stage, does not list
it. It is simple to add on top of an existing coverage student, and we
retain it as an engineering alternative. OTW remains our default: it
trains the encoder and the weights end to end in one stage, keeps a single
training pipeline, and needs fewer epochs (10, against 10 + 2 for the
two-stage recipe). Appendix~\ref{app:intervention} measures the effect of
the deployed weights directly.

\subsection{Protocol for inference-time analyses}\label{app:protocol}

Section~\ref{sec:mechanism} compares objectives end to end: each row of
Table~\ref{tab:objectives} is a separately trained student. The analyses
of Appendices~\ref{app:intervention} and~\ref{app:eps} use a different
control. They take one trained student, freeze its token embeddings, and
replace only the weight vector used at inference, so that a difference in
NDCG@5 is attributable to the weights alone. Retrieval numbers are NDCG@5
over the complete benchmark, with the same protocol and cached teacher
embeddings as Table~\ref{tab:main}. A student token is \emph{dead} if it
is the nearest student token of no teacher token of its query, that is, if
$a^{\mathrm{NN}}_i=0$ in Equation~\ref{eq:nn-weights}. Every weight vector
other than \emph{learned} and \emph{uniform} is computed from the teacher's
query tokens and is therefore an oracle that cannot be deployed.

\subsection{Weight interventions}\label{app:intervention}

Table~\ref{tab:intervention} replaces the learned weights of the 149M and
395M OTW students at inference. Pruning the dead tokens changes almost
nothing, so the head has effectively already removed them. Reverting to
uniform weights costs 1.6/3.5/3.5 points on the 149M student and
1.4/1.3/1.9 on the 395M student, more than the gap between OTW and
OT-uniform under separate training (Table~\ref{tab:objectives}): the
embeddings of a student trained with learned weights are adapted to them.
Placing all mass on the dead tokens collapses retrieval.

\begin{table}[t]
\centering
\footnotesize
\begin{tabular}{lcc}
\toprule
Weights & 149M (coverage) & 395M \\
\midrule
hard assignment & 55.2 & 56.6 \\
soft, $\varepsilon=0.005$ & 55.2 & 56.6 \\
soft, $\varepsilon=0.02$ & 55.2 & 56.7 \\
soft, $\varepsilon=0.05$ & 55.3 & 56.7 \\
soft, $\varepsilon=0.1$ & 55.3 & 56.6 \\
soft, $\varepsilon=0.2$ & 55.0 & 56.3 \\
soft, $\varepsilon=0.5$ & 54.5 & 55.6 \\
soft, $\varepsilon=1.0$ & 54.3 & 55.2 \\
flattened, $T=2$ & 54.8 & 56.1 \\
flattened, $T=4$ & 54.5 & 55.5 \\
flattened, $T=8$ & 54.2 & 55.2 \\
flattened, $T\to\infty$ & 54.0 & 54.9 \\
uniform & 53.9 & 54.8 \\
learned & -- & 56.7 \\
\bottomrule
\end{tabular}
\caption{Inference-weight variants on fixed geometries, NDCG@5 on ViDoRe v3.
``149M (coverage)'' is the 149M student trained with coverage; ``395M'' is
the 395M OTW student. All variants except \emph{learned} and \emph{uniform}
use the teacher's query tokens at inference.}
\label{tab:smooth-full}
\end{table}

\begin{table*}[t]
\centering
\small
\begin{tabular}{llrrrrrr}
\toprule
Objective & Reads & \multicolumn{1}{c}{$\sup_D|\Delta \bar S|$}
 & \multicolumn{1}{c}{centered} & \multicolumn{1}{c}{$\rho_s$}
 & \multicolumn{1}{c}{$\W$} & \multicolumn{1}{c}{$\sqrt{2\mathrm{OT}_c}$}
 & \multicolumn{1}{c}{$\sqrt{2\langle P^{\varepsilon}\!,C\rangle}$} \\
\midrule
\quad InfoNCE & D & 0.335 & 0.251 & 0.780 & 1.256 & 1.262 & 1.287 \\
\quad Listwise KL & Q+D & 0.157 & 0.106 & 0.936 & 0.913 & 0.925 & 0.953 \\
\quad Coverage & Q & 0.084 & 0.079 & 0.949 & 0.613 & 0.656 & 0.694 \\
\quad OT-uniform & Q & 0.083 & 0.082 & 0.945 & 0.656 & 0.674 & 0.723 \\
\quad OTW & Q & 0.069 & 0.059 & 0.956 & 0.568 & 0.596 & 0.621 \\
\bottomrule
\end{tabular}
\caption{Measured discrepancy against the bounds of
Corollary~\ref{cor:chain} for the ColQwen3.5 students of
Table~\ref{tab:objectives}, medians over the analysis sample.
``centered'' removes the per-query mean offset; $\rho_s$ is the Spearman
correlation between the student's and the teacher's document scores;
``Reads'' as in Table~\ref{tab:objectives}.}
\label{tab:bound}
\end{table*}

\subsection{Oracle weights and the role of $\varepsilon$}\label{app:eps}

Table~\ref{tab:smooth-full} sweeps teacher-derived oracle weights on two
fixed geometries, the 149M student trained with coverage and the 395M OTW
student. The variants are the hard assignment of Equation~\ref{eq:nn-weights};
the soft assignment $a^{\varepsilon}$ of Proposition~\ref{prop:semirelaxed},
in which each teacher token spreads its mass over student tokens in
proportion to $e^{-C_{ij}/\varepsilon}$; and temperature-flattened versions
of the hard assignment. The hard, soft, and learned weights lie within 0.1
points of one another for $\varepsilon\le 0.1$, and the hard and soft
assignments beat uniform weights by 1 to 2 points; the soft assignment
degrades toward uniform as $\varepsilon$ grows past $0.2$. The learned
weights therefore recover what a reasonable teacher-derived weighting
would give, without the teacher, and the choice of $\varepsilon$ is not
delicate below the training value.

Training-time sensitivity is of the same size. Retraining the 149M student
with $\varepsilon=0.02$ gives 90.8/59.1/55.1 on v1/v2/v3, and with
$\varepsilon=0.1$ gives 90.3/59.5/54.2, against 90.7/60.0/55.1 at the
default $0.05$. The default is within a point of either neighbor on every
benchmark and best or tied on two of three.

\section{Empirical Check of the Bound}\label{app:bound}

Theorem~\ref{thm:w1} is a sufficient condition, used in
Section~\ref{sec:method} as a design principle. This appendix measures
whether the chain of Corollary~\ref{cor:chain} holds on trained students,
how loose it is, and whether the alignment cost tracks retrieval
quality.

\paragraph{Protocol.}
We use a fixed sample of 4{,}735 evaluation queries, the \emph{analysis
sample}: arxivqa (500) and docvqa (451) from v1, biomedical\_lectures (640)
from v2, and finance\_en (1{,}854) and cs (1{,}290) from v3, chosen to span
the three benchmarks. For every query of this sample and each
single-objective ColQwen3.5 student of Table~\ref{tab:objectives} (the
learned-weight KL student of Appendix~\ref{app:weighted-kl} is not
included), we score every page of the corresponding corpus with the
student's and the teacher's measures, using the weights the student
deploys (learned for OTW, uniform for the others), and take
$\sup_D|\bar S(\mu_S,D)-\bar S(\mu_T,D)|$. We compare it with the three
bounds of Corollary~\ref{cor:chain}: $\W$, computed exactly by linear
programming under the chordal distance; $\sqrt{2\,\mathrm{OT}_c}$,
approximated by entropic transport at $\varepsilon=0.005$ with 300
iterations; and $\sqrt{2\,\langle P^{\varepsilon},C\rangle}$ at the
training setting. Atoms carrying less than $10^{-6}$ of the student's mass
are dropped before the linear program, which changes $\W$ by at most twice
the dropped mass; every query solves. Table~\ref{tab:bound} reports
medians over queries.

\paragraph{The bound holds and is not vacuous.}
The chordal bound $\W$ (Proposition~\ref{prop:w1}) and the training-setting bound
$\sqrt{2\langle P^{\varepsilon},C\rangle}$ hold on every query of every
student. The intermediate $\sqrt{2\,\mathrm{OT}_c}$, which we only
approximate, falls below $\W$ on a minority of queries (at most 12\%, for
OT-uniform, and by at most 0.07), a residual of the approximation rather
than a failure of the chain. It is not vacuous: $|\bar S|\le 1$
makes $2$ the trivial bound, and the document-free students measure $\W$
between $0.57$ and $0.66$. It is not tight either: for those students $\W$
exceeds the worst-case discrepancy it certifies by $7.3$ to $8.3$ times.
Most of that slack is already present in the chordal bound: passing to
the cosine cost of Theorem~\ref{thm:w1} inflates it by a further $3$ to
$7$ percent, and the entropic term at the training setting
by $4$ to $7$ percent. Removing the per-query offset, which shifts every
page equally and cannot change a ranking, changes the discrepancy of the
document-free students by less than $20$ percent.

\paragraph{Orderings.}
Among the document-free objectives, $\W$ orders the students as their
average NDCG@5 does, OTW below coverage below OT-uniform
(Table~\ref{tab:objectives}; the single benchmark-level exception is v2),
and OTW has the highest Spearman correlation with the teacher's document
scores. The document-dependent objectives, in which nothing drives the two
measures together, leave $\W$ at $0.91$ and $1.26$, yet Listwise KL reaches a
Spearman correlation of $0.94$: a student can score accurately while its
measure remains far from the teacher's. This is consistent with the bound
being sufficient rather than necessary (Section~\ref{sec:theory}).

\end{document}